\numberwithin{equation}{section}
\theoremstyle{plain}
\newtheorem{theorem}{Theorem}
\numberwithin{theorem}{section}
\newtheorem{lemma}[theorem]{Lemma}          
\newtheorem{proposition}[theorem]{Proposition}
\newtheorem{corollary}[theorem]{Corollary}
\theoremstyle{definition}
\newtheorem{definition}[theorem]{Definition}
\newtheorem{remark}[theorem]{Remark}
\newtheorem{assumption}[theorem]{Assumption}
\newcommand{\<}{\langle}
\renewcommand{\>}{\rangle}
\renewcommand{\(}{\left(}
\renewcommand{\)}{\right)}
\newcommand\Eb{\mathds{E}}
\newcommand\Pb{\mathds{P}}
\newcommand\Rb{\mathds{R}}
\newcommand\Nb{\mathds{N}}
\newcommand\Zb{\mathds{Z}}
\newcommand\Ac{\mathscr{A}}
\newcommand\Bc{\mathscr{B}}
\newcommand\Fc{\mathscr{F}}
\newcommand\Gc{\mathscr{G}}
\newcommand\Lc{\mathscr{L}}
\newcommand\Pc{\mathscr{P}}
\newcommand\Xc{\mathscr{X}}
\newcommand\eps{\varepsilon}
\newcommand\Om{\Omega}
\newcommand\sig{\sigma}
\newcommand\gam{\gamma}
\newcommand\Gam{\Gamma}
\newcommand\lam{\lambda}
\newcommand\del{\delta}
\newcommand\xb{\bar{x}}
\newcommand\ub{\bar{u}}
\newcommand\Gamb{\bar{\Gamma}}
\newcommand\Pib{\overline{\Pi}}
\newcommand\Cv{\mathbf{C}}
\newcommand\mv{\mathbf{m}}
\newcommand\fh{\widehat{f}}
\newcommand\Ebt{\widetilde{\Eb}}
\newcommand\Pbt{\widetilde{\Pb}}
\newcommand\Act{\widetilde{\Ac}}
\newcommand\Wt{\widetilde{W}}
\newcommand\ut{\widetilde{u}}
\newcommand\Gamt{\widetilde{\Gam}}
\newcommand\zt{\widetilde{z}}
\newcommand\mut{\widetilde{\mu}}
\renewcommand\d{\partial}
\newcommand\ii{\mathtt{i}}
\newcommand\dd{\mathrm{d}}
\newcommand\ee{\mathrm{e}}
\def\1{{\mathbf 1}}        
\def\setA{{\mathcal S}}
\begin{document}
\title{Optimal Static  Quadratic Hedging
}
\author{Tim Leung \thanks{Industrial Engineering \& Operations Research Department, Columbia University, New York, NY 10027; email:\,\mbox{leung@ieor.columbia.edu}.  } 
\and 
Matthew Lorig \thanks{Department of Applied Mathematics, University of Washington, Seattle, WA 98195; email:\,\mbox{mlorig@uw.edu}. Corresponding author.}
}
\date{\today}
 \maketitle

\begin{abstract} 
We propose a flexible framework for hedging a contingent claim by holding static positions in  vanilla European calls, puts, bonds, and forwards. A model-free expression is derived   for the optimal static hedging strategy that minimizes the expected squared hedging error subject to a  cost constraint.  The optimal hedge involves computing a number of expectations that reflect the dependence among the contingent claim and the hedging assets.  We provide a general method for approximating these expectations analytically in a general Markov diffusion market. To illustrate the versatility of our approach, we present several  numerical examples, including hedging path-dependent options and options written on a correlated asset.
\end{abstract}

 \begin{small}
  \textbf{Keywords:}\, static hedging, leveraged ETF options, substitute hedging 

  \textbf{JEL Classification:}\,  C52, D81, G11, G13

  \textbf{Mathematics Subject Classification (2010):}\,  91G20, 91G80, 93E20
\end{small}

%
%

\section{Introduction}

Hedging derivatives  using a static portfolio of standard financial instruments is a well-known alternative to dynamically hedging with the underlying asset. A static hedging portfolio is easy to construct and requires no continuous monitoring of the underlying or  rebalancing over time. As such, static hedging strategies are more robust to significant underlying movements through market turbulence. Furthermore, static hedging portfolios  are often  useful for establishing   no-arbitrage relationships or bounds for exotic derivatives. This idea dates back to \cite{breeden}  for standard options, and has been  applied to exotic derivatives, such as basket options (see \cite{HobsonStatic}).

  A fundamental  result on static hedging due to \cite{carrmadan1998} shows that any European-style claim written on a single underlying asset can be perfectly replicated by holding a fixed number of bonds and forwards, along with  a basket of European calls and puts with the same underlying.  The importance of this result is that it provides a model-free, perfect, static  replicating strategy. As such, it also gives a no-arbitrage price relationship between the contingent claims and the hedging instruments.  Nevertheless,   there are also a number of  limitations.  In particular, the static hedging strategy requires an unbounded continuous  strip of European calls and puts and must include the bond and forward in the portfolio.  In reality, calls and puts are available only at discrete strikes in a finite interval.  This leads to a  practical question:  how can one optimally construct a static hedge with only a finite number of calls and puts, with or without forwards on the same underlying?  More generally, when there are simply not enough traded standard instruments   to achieve a perfect static hedge, or when the hedger faces a binding cost constraint,   the result of \cite{carrmadan1998} provides no direction on how one might proceed.

\par
In this paper, we propose a flexible framework for hedging a contingent claim by choosing static positions in  vanilla European calls, puts, bonds, and forwards. We are primarily interested in applications where the perfect static hedge is not available given a set of hedging instruments. To this end,  we   minimize the expected squared hedging error subject to a  cost constraint.   Our main result is a  model-free expression    for the optimal static hedging strategy, which involves computing a number of expectations that reflect the dependence among the contingent claim and the hedging assets.  We provide a general method for approximating these expectations analytically in a general incomplete Markov diffusion setting that includes, but is not limited to,  the well-known geometric Brownian motion (GBM), Heston CEV and SABR models.  
 
\par

Compared to   \cite{carrmadan1998}, our framework includes a number of  additional features. First, we allow  for finite upper and lower bounds on the strikes of calls/puts used. Our static portfolio can  involve  any subset of the hedging assets among bonds, forwards,  calls and puts, as opposed to include \emph{all} of them. This gives the added flexibility to apply to underlying  assets on  which the   forward contracts or some calls/puts are not  written.  Also,  a cost constraint  is incorporated  into the hedging problem. When binding, this constraint   may render a perfect static hedge impossible, and force the hedger to adjust the portfolio to minimize hedging error.    While our methodology does not   {a priori} assume the hedge is perfect, it can recover the perfect static hedge when it is available. This allows us to reconcile with the results in \cite{carrmadan1998} as a special case of our framework.

\par

In the recent literature,  \cite{CarrWuStatic} work in a single-factor model and propose a finite approximation for the static hedging portfolio whose weights are computed based on the Gauss-Hermite quadrature rule. Also, there is a wealth of static hedging results  specifically for  barrier options under one-dimensional diffusion models; see \cite{DermanKaniStatic,CarrChou97,CarrEllisGupta,pcs,CarrSergei,Bardos2010}, among others. In contrast to these works, our framework applies to other exotic derivatives and multi-dimensional diffusion models. We illustrate the static hedging strategies in three examples: Asian options,  leveraged exchange-traded fund (LETF) options, and options with an illiquid underlying.

\par

The rest of this paper proceeds as follows: In Section \ref{sec:problem}, we formulate the optimal static hedging problem.   In Section \ref{sec:cont},  we present our main results on  hedging a contingent claim with a static portfolio of   bonds,  forwards and a   strip of calls/puts.  We also derive the optimal portfolio that consists of a finite set of assets. In both scenarios, we provide explicit, model-free optimal hedges.  In Section \ref{sec:practical}, we discuss a practical method to numerically compute the hedging strategies in a general Markov diffusion setting.  Lastly, in Section \ref{sec:examples}, we implement and illustrate  our static hedging strategies in a number of applications.

%
%

\section{Problem formulation}
\label{sec:problem}
In the background, we fix a complete filtered probability space $(\Om,\Fc,(\Fc_t)_{t\geq0},\Pb)$, where $\Pb$ represents the physical probability measure and the filtration $(\Fc_t)_{t\geq0}$ represents the price history of the assets in the market. The market is assumed to be arbitrage-free but may be incomplete.  We take as given an equivalent  martingale (pricing) measure $\Pbt \sim \Pb$, inferred from current market derivatives  prices. For simplicity, we also assume a zero interest rate and no dividends.
\par
Our static hedging problem involves a group of \emph{hedging assets}  $Z=(Z(x))_{x \in I}$, with $I$ being the index set. The number of hedging assets in  $I$ may be finite, countably infinite or uncountably infinite.  The hedging assets could be, for example, bonds, stocks, calls, puts, forwards or other derivative securities. The price of each asset at any time $t$ is denoted by $Z_t(x)$.


We define a \emph{Static Portfolio} as a 
{signed measure $\Pi: {\Bc(I)} \to \Rb$}
such that the static portfolio value $V_t^\Pi$ at any time $t$ is given by
\begin{align}
V_t^\Pi
	&=	\int_I {\Pi(\dd x)} Z_t(x) . \label{eq:V.pi.general}
\end{align}
In other words,  ${\Pi(\dd x)}$ denotes the quantity of asset $Z(x)$ of type  ${x \in \dd x}$  held  in the static portfolio.  
 Observe that $\Pi(\dd x)$ may be negative, indicating a short position. 
{Note that, while asset prices $(Z_t(x))_{x\in I}$ and the value of the static portfolio $V_t^\pi$ change with $t$, the number of units $\Pi(\dd x)$  remains constant for all {$t$}.}

\begin{remark}
\label{rmk:pi}
We will consider two main examples in this manuscript:
(i) hedging with calls/puts with strikes $K$ in an interval $K \in [L,R)$, and
(ii) hedging with a finite number of assets.
In setting (i), the signed measure $\Pi$ maps $\Bc([L,R)) \to \Rb$.  In this case, we will assume that $\Pi$ is absolutely continuous with respect to the Lebesgue measure and write $\Pi(\dd K) = \pi(K) \dd K$ where $\pi$ is a function that maps $[L,R) \to \Rb$.
In setting (ii), the signed measure $\Pi$ maps $\Bc(\Zb_+) \to \Rb$.  In this case, we will write $\Pi(\{i\}) = \pi_i$ where $\pi_\cdot$ is a function that maps $\Zb_+ \to \Rb$.
\end{remark}

We now consider the contingent claim  to be hedged at a future time $T$. Its market price  at {any time $t$  is denoted by  $\Xi_t$}. If the claim expires at time $T$, then $\Xi_T$ is the terminal payoff.   We are primarily interested in situations where perfect static replication is impossible with a given set of hedging assets.   Our goal is to minimize the expected  squared hedging error  of  the static portfolio at time $T$ subject to a possible cost constraint.  We define the \emph{optimal static portfolio} $\Pi^*$ as the solution of the following optimization problem:
\begin{align}
\Pi^*
	&:=\operatorname*{arg\,min}_{\Pi \in \setA } \Eb [ \, (V_T^\Pi - \Xi_T)^2 ], &
\setA
	&:=	\{ \Pi : V_0^\pi \leq C \} . \label{eq:pi.star}
\end{align}
That is, $\Pi^*$ is the static portfolio that minimizes the expectation $\Eb [\, (V_T^\Pi - \Xi_T)^2]$ subject to the cost constraint $V_0^\Pi \leq C$. 
{Note that the expectation in \eqref{eq:pi.star} is evaluated under  the physical probability measure $\Pb$}.   Clearly, a perfect static hedge ($V_T^{\Pi^*}=\Xi_T$ $\Pb$-a.s.) is possible if and only if $\Eb [\, (V_T^{\Pi^*} - \Xi_T)^2] = 0$.  Note that the value of a portfolio $V_t^\Pi$ at time $t \leq T$ can be expressed as
\begin{align}
V_t^\Pi
	&=	\int_I \Pi(\dd x) \Ebt[ Z_T(x) | \Fc_t ] ,
\end{align}
since all assets are martingales under the pricing measure $\Pbt$.
{Thus, the cost constraint $V_0^\Pi \leq C$ involves computation under the pricing measure $\Pbt$.}

\par
Naturally, the  optimal hedging performance  and the corresponding  static portfolio $\Pi^*$  depend on the hedging assets available in the market, as well as the underlying price dynamics.  Our main objective is twofold: (i) we provide a model-free expression for the optimal static hedging strategy when the hedging assets include bonds, forwards,  vanilla European calls and puts on the same underlying; (ii) we discuss  the implementation of the hedging strategies for a number of claims  under  Markovian diffusion dynamics.

\section{Methodology \& Main Results}
\label{sec:cont}
In this section, the  set of hedging instruments contains a zero-coupon bond $B$, which pays one unit of currency at time $T$, a forward contract written on an underlying asset $S$ with payoff $(S_T-S_0)$, and $T$-maturity European puts and calls written on $S$.  We assume there is a put at every strike $K \in [L,S_0)$ and call at every strike $K \in [S_0,R)$, with $0\le L \le S_0 \le R\le \infty$. 
Let us denote by $g(K,S_T)$ the payoff the call/put with strike $K$.  That is
\begin{align}
g(K,S_T)
	&=	\left\{ \begin{aligned}
			&(K - S_T)^+ & K \in [L, S_0) \\
			&(S_T - K)^+ & K \in [S_0,R) 
			\end{aligned} \right. . \label{eq:g.def}
\end{align}
While we observe  $L> 0$ and $R < \infty$ in  practice and in our numerical examples, our model also allows for  $L = 0$ and $R =  \infty$ so that we can reconcile with the results in  \cite{carrmadan1998} (see Sect. \ref{sec:carrmadan} below).

The terminal value of the static portfolio, composed of  $q$ bonds, $p$ forwards and $\pi(K) \dd K$ units of  European calls/puts with strikes in the interval $\dd K$,  is given by
 \begin{align}
V_T^\pi =	q   + p (S_T- S_0) + \int_L^R \dd K \,  \pi(K) g(K, S_T) , 
\label{eq:V.cont}
\end{align}
 where $p$, $q$ and $\pi(K)$ may be either positive or negative (indicating a long or short position). 
 
The cost constraint  is given  by 
\begin{align} \label{Hpq}
H(\pi,q): =  q + \int_L^R \dd K \, \pi(K) \zt(K)   \le C, \quad \text{ where} \quad  \zt(K)
	 :=	\Ebt\, [ g(K,S_T) ]. 
\end{align}
Note  that, since the cost to enter a forward contract at inception is zero, the number of forward contracts $p$ in the static portfolio plays no role in the cost constraint.


With $V_T^\pi$  given by \eqref{eq:V.cont}, and cost constraint by \eqref{Hpq},  algebraic calculations   show  that the static hedging problem   \eqref{eq:pi.star} is equivalent to solving for 
\begin{align}(\pi^*,q^*,p^*) &:=\operatorname*{arg\,min}_{(\pi,q,p) \in \setA} 
J(\pi,q,p) , &
\setA
	&:=	\{ (\pi,q,p) : H(\pi,q)  \leq C \} . \label{eq:pi.star.3}
\end{align}
where
\begin{align}
J(\pi,q,p) & :=\Eb [ \, (V_T^\pi - \Xi_T)^2 ] \\
	&=	q^2 + p^2 \Sigma + \int_L^R \int_L^R \dd K \dd K' \pi(K) \psi(K,K') \pi(K') \\ &\qquad
			+ 2 q p \beta + 2 q \int_L^R \dd K \, \pi(K) z(K)
			- 2 q \xi \\ &\qquad
			+ 2 p \int_L^R \dd K \, \pi(K) y(K)
			- 2 p \theta
			- 2 \int_L^R \dd K \, \pi(K) \gam(K) , 
\end{align}
and we have defined the expectations:
\begin{align}
\left. \begin{aligned}
\beta
	&=	\Eb \,  [\,S_T - S_0 ] , 
	& \theta	
	&=	\Eb \, [(S_T - S_0) \,\Xi_T] , & \Sigma
	&=	\Eb \, [(S_T - S_0)^2 ], \\
\xi 
	&=	\Eb \,[\, \Xi_T] ,  
	&
\gam(K)
	&=	\Eb \,[\,\Xi_T \, g(K,S_T)] , &\psi(K,K')	
	&=	\Eb \, [\,g(K, S_T) g(K', S_T) ], 
\\
	z(K)	
	&=	\Eb \,[\, g(K,S_T)],   & y(K)	
	&=	\Eb \,[ (S_T - S_0) g(K,S_T) ]  .& & 
\end{aligned} \right\}\\ \label{eq:psi}
\end{align}



In order to state and prove the optimal hedging strategy in this setting, we need the following Lemma. As preparation, it is convenient to introduce the probability density functions of $S_T$ under the physical (i.e., historical) and risk-neutral probability measures
\begin{align}\label{Gammas}
\Gam_{S_T} (K) \dd K 
	&=	\Pb ( S_T \in \dd K) , &
\Gamt_{S_T}(K) \dd K 
	&=	\Pbt ( S_T \in \dd K ) .
\end{align}

\begin{lemma}
\label{thm:int.eq}
Assume the random variable $S_T$ has a strictly positive density $\Gam_{S_T} \in C^2(\Rb_+)$.
Recall the function $\psi$ as defined in  \eqref{eq:psi}, and let $f:\Rb_+ \to \Rb$ be $C^4(\Rb_+)$.
Then the solution $\pi$ of the integral equation
\begin{align}
f(K)
	&=	\int_L^R \dd K' \, \pi(K') \psi(K,K') , \label{eq:f=int}
\end{align} 
is given by
\begin{align}
\pi(K)	
	&:=	\d_K^2 \( \frac{ \d_K^2 f(K)}{\Gam_{S_T}(K)}\). \label{eq:pi.result}
\end{align}
\end{lemma}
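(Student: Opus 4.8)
The plan is to invert the integral operator $\pi \mapsto \int_L^R \pi(K')\,\psi(\,\cdot\,,K')\,\dd K'$ by factoring it through the much simpler operator $\pi \mapsto \int_L^R \pi(K')\,g(K',\,\cdot\,)\,\dd K'$, and to observe that this simpler operator is inverted by a single application of $\d_K^2$. The one fact that does all the work is the Carr--Madan-type identity
\begin{align}
\d_K^2\, g(K,s) = \del(s-K) \qquad \text{in the distributional sense in the strike } K ,
\end{align}
valid for every $s>0$ and every $K\in(L,R)$ with $K\neq S_0$: both the put leg $(K-s)^+$ on $[L,S_0)$ and the call leg $(s-K)^+$ on $[S_0,R)$ have a unit point mass at the strike as their second strike-derivative. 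To keep everything rigorous I would avoid manipulating $\del$ directly and instead split each integral over $s$ at the kink $s=K$ before differentiating, so that the Leibniz rule applies and the boundary terms vanish because the payoff is continuous across its own strike.

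First, since $\psi(K,K')=\Eb[g(K,S_T)\,g(K',S_T)]$, Fubini's theorem turns \eqref{eq:f=int} into
\begin{align}
f(K) = \Eb\!\left[\,g(K,S_T)\,h(S_T)\right] = \int_0^\infty g(K,s)\,h(s)\,\Gam_{S_T}(s)\,\dd s, \qquad h(s):=\int_L^R \pi(K')\,g(K',s)\,\dd K' .
\end{align}
Applying the identity above with ``density'' $h\,\Gam_{S_T}$ gives $\d_K^2 f(K)=h(K)\,\Gam_{S_T}(K)$ for $K\in(L,R)$, hence $h(K)=\d_K^2 f(K)/\Gam_{S_T}(K)$, which is well defined and of class $C^2$ because $\Gam_{S_T}\in C^2$ is strictly positive and $f\in C^4$. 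Next, since $g(K',\,\cdot\,)$ vanishes whenever $K'$ sits on the far side of $S_0$, the very same identity applied to the definition of $h$ yields $\d_K^2 h(K)=\pi(K)$ for $K\in(L,R)$. Combining the two steps,
\begin{align}
\pi(K) = \d_K^2 h(K) = \d_K^2\!\left(\frac{\d_K^2 f(K)}{\Gam_{S_T}(K)}\right),
\end{align}
which is \eqref{eq:pi.result}; running the same computation in reverse shows that this (continuous) $\pi$ does solve \eqref{eq:f=int}, so the lemma delivers both existence and the stated form.

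I expect the real work to be entirely in the measure-theoretic bookkeeping rather than in the algebra. The two points requiring care are (i) justifying the differentiation under the integral sign for the non-smooth payoff $g$, handled by the ``kink-at-the-endpoint'' splitting noted above, and (ii) the fact that, as a function of its strike, $g$ is discontinuous at $K=S_0$, where it switches from a put to a call payoff; this is why the two identities are first derived separately on $(L,S_0)$ and on $(S_0,R)$ and then glued across $S_0$ using continuity of $\d_K^2 f/\Gam_{S_T}$, respectively of $\pi$. A minor additional check is that $\Gam_{S_T}$ has enough decay at $0$ and at $\infty$ for $h$, for $\psi$, and for all the expectations in \eqref{eq:psi} to be finite. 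None of this is conceptually deep, so point (ii) --- the bookkeeping around the put/call junction at $S_0$ --- is really the only mild obstacle.
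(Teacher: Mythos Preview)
Your proposal is correct and follows essentially the same approach as the paper: both apply $\d_K^2 g(K,s)=\del(s-K)$ to the outer integral to obtain $\d_K^2 f(K)=\Gam_{S_T}(K)\int_L^R \pi(K')\,g(K',K)\,\dd K'$, and then extract $\pi$ by a second application of the same device. The only mechanical difference is that for the second step the paper integrates by parts twice in $K'$ (invoking the vanishing of $g(K',K)$ and $\d_{K'}g(K',K)$ at $K'=L,R$), whereas you differentiate directly in the spot argument and exploit the put/call split at $S_0$ so that no boundary terms arise; your version is slightly cleaner in that respect, and you are also more explicit than the paper about the gluing at $K=S_0$.
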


\begin{proof}
In what follows, let $\Pi$ be the anti-derivative of $\pi$ and $\Pib$ be the anti-derivative of $\Pi$ so that $\Pi' = \pi$ and $\Pib'' = \pi$.  Observe from \eqref{eq:g.def} that
\begin{align}
0
	&=	\lim_{K' \to L} g(K',K)
	=		\lim_{K' \to L} \d_{K'} g(K',K) 
	=		\lim_{K' \to R} g(K',K)
	=		\lim_{K' \to R} \d_{K'} g(K',K) . \label{eq:limits}
\end{align}
Let us further observe that $\d_K^2 g(K,s) = \del(K-s)$.
Then, equation \eqref{eq:f=int} implies
\begin{align}
\d_K^2 f(K)
	&=	\d_K^2 \int_L^R \dd K' \, \pi(K') \psi(K,K') \\
	&=	\int_L^R \dd K' \, \pi(K')  \d_K^2 \Eb \,[ g(K,S_T) g(K',S_T)] &
	&\text{(by \eqref{eq:psi})} \\
	&=	\int_L^R \dd K' \, \pi(K')  \d_K^2 \int_0^\infty \dd s \, g(K,s) g(K',s) \Gam_{S_T}(s) \\
	&=	\int_L^R \dd K' \, \pi(K')  \int_0^\infty \dd s \, \d_K^2  g(K,s) g(K',s) \Gam_{S_T}(s) \\
	&=	\int_L^R \dd K' \, \pi(K')  \int_0^\infty \dd s \, \del(K-s) g(K',s) \Gam_{S_T}(s) \\
	&=	\Gam_{S_T}(K) \int_L^R \dd K' \, \pi(K') g(K',K) \\
	&=	\Gam_{S_T}(K) \bigg( g(K',K) \Pi(K') \Big|_L^R - \d_{K'} g(K',K) \Pib(K') \Big|_L^R \\ & \qquad
			+ \int_L^R \dd K' \, \Pib(K') \d_{K'}^2 g(K',K) \bigg) &
	&\text{(integrate by parts)}\\
	&=	\Gam_{S_T}(K) \,\Pib(K) &
	&\text{(by \eqref{eq:limits}).} \label{eq:d2f}
\end{align}
To obtain \eqref{eq:pi.result}, simply divide \eqref{eq:d2f} by $\Gam_{S_T}(K)$, differentiate both sides twice, and use $\Pib'' = \pi$. 
\end{proof}

Using Lemma \ref{thm:int.eq}, we can now state and prove the optimal hedging strategy. To this end, we define the function
\begin{align}
\pi(K,\lam)
	&:=	\d_K^2 \( \frac{\d_K^2 \gam(K)}{\Gam_{S_T}(K)}  + \frac{\lam}{2} \frac{\Gamt_{S_T}(K)}{\Gam_{S_T}(K)} \) ,\label{eq:pi.x2}
\end{align} 
where $K \in [L, R]$, $\lam \in \mathbb{R}$,  the densities $\Gam_{S_T}$ and $\Gamt_{S_T}$ are defined in \eqref{Gammas} and the function $\gam(K)$ is given   in \eqref{eq:psi}.

\begin{theorem}
\label{thm:cont}
Assume the random variable $S_T$ has a strictly positive density $\Gam_{S_T} \in C^2(\Rb_+)$ under $\Pb$ and a density $\Gamt_{S_T} \in C^2(\Rb_+)$ under $\Pbt$.
Assume further that $\gam \in C^4(\Rb_+)$.
Finally, assume the matrix inverses defined in \eqref{eq:soln.2} and \eqref{eq:soln.3} are well defined.
Then the optimal  strategy  $(\pi^*,q^*,p^*)$ that solves the optimal  static hedging problem \eqref{eq:pi.star.3} is given by
\begin{align}
(\pi^*,q^*,p^*) 	&=	\left\{ \begin{aligned}
			&(\pi(\cdot,\lam^U),q^U,p^U) &
			&\text{if} &
			&q^U + \int_L^R \dd K \, \pi(K,\lam^U) z(K) \leq C\, ,\\
			&(\pi(\cdot,\lam^C),q^C,p^C) &
			&\text{else}, &	\end{aligned} \right.
\end{align}
where
\begin{align}
\lam^U
	&:=	0 , &
\begin{pmatrix}
q^U \\ p^U
\end{pmatrix}
&:=
{\begin{pmatrix}
1 &
\beta \\
\beta &
\Sigma
\end{pmatrix}}^{-1}
\begin{pmatrix}
\xi - \int_L^R \dd K \, z(K) \,\d_K^2 \( \frac{\d_K^2 \gam(K) }{\Gam_{S_T}(K)} \) \\
\theta - \int_L^R \dd K \, y(K) \,\d_K^2 \( \frac{\d_K^2 \gam(K) }{\Gam_{S_T}(K)} \)
\end{pmatrix} , \label{eq:soln.2}
\end{align}
and
\begin{align}
\begin{pmatrix}
q^C \\ p^C \\ \lam^C
\end{pmatrix}
\!&:=\!
{\begin{pmatrix}
1 & \beta & -\tfrac{1}{2} \!+ \tfrac{1}{2}\int_L^R \dd K \, z(K) \d_K^2 \!\!\( \frac{ \Gamt_{S_T}(K) }{\Gam_{S_T}(K)} \)\\
\beta & \Sigma &
\tfrac{1}{2}\int_L^R \dd K \, y(K) \d_K^2\!\! \( \frac{ \Gamt_{S_T}(K) }{\Gam_{S_T}(K)} \) \\
1 & 0 & \tfrac{1}{2}\int_L^R \dd K \, \zt(K) \d_K^2 \!\!\( \frac{ \Gamt_{S_T}(K) }{\Gam_{S_T}(K)} \)
\end{pmatrix}}^{-1}\!\!
\begin{pmatrix}
\xi - \int_L^R \dd K \, z(K) \d_K^2 \!\!\( \frac{\d_K^2\gam(K) }{\Gam_{S_T}(K)} \) \\
\theta - \int_L^R \dd K \, y(K) \d_K^2 \!\!\( \frac{\d_K^2 \gam(K) }{\Gam_{S_T}(K)} \) \\
C - \int_L^R \dd K \, \zt(K) \d_K^2 \!\!\( \frac{\d_K^2 \gam(K) }{\Gam_{S_T}(K)} \)
\end{pmatrix} .\\
 \label{eq:soln.3}
\end{align}
\end{theorem}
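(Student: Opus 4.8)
The plan is to treat the static hedging problem \eqref{eq:pi.star.3} as a convex program and solve it by Lagrangian duality. Since $J(\pi,q,p)=\Eb[(V_T^\pi-\Xi_T)^2]$ is the expectation of a square it is a convex functional of $(\pi,q,p)$, and the feasible set $\setA=\{(\pi,q,p):H(\pi,q)\le C\}$ is a half-space, hence convex. Therefore any point satisfying the Karush--Kuhn--Tucker conditions (stationarity, primal/dual feasibility, complementary slackness) is automatically a global minimizer, and under the stated invertibility hypotheses it is the unique such point. Accordingly it suffices to produce a KKT point, and the two branches in the statement will correspond to the cost constraint being slack (multiplier $\lam=0$) or active ($H(\pi,q)=C$).

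\textbf{Slack constraint.} First ignore the cost constraint and minimize $J$ alone. Setting the Gateaux derivative of $J$ in the $\pi$-direction to zero produces an integral equation of precisely the form treated in Lemma \ref{thm:int.eq},
\begin{align}
\int_L^R \dd K'\, \pi(K')\,\psi(K,K') = \gam(K) - q\,z(K) - p\,y(K).
\end{align}
Its right-hand side is $C^4(\Rb_+)$: $\gam$ is $C^4$ by hypothesis, while differentiating under the expectation with $\d_K^2 g(K,s)=\del(K-s)$ (exactly as in the proof of Lemma \ref{thm:int.eq}) gives $\d_K^2 z(K)=\Gam_{S_T}(K)$ and $\d_K^2 y(K)=(K-S_0)\,\Gam_{S_T}(K)$, both of which are $C^2$. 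Applying Lemma \ref{thm:int.eq} and then inserting these two identities,
\begin{align}
\pi(K)=\d_K^2\!\(\frac{\d_K^2\gam(K)}{\Gam_{S_T}(K)}-q-p\,(K-S_0)\)=\d_K^2\!\(\frac{\d_K^2\gam(K)}{\Gam_{S_T}(K)}\)=\pi(K,0),
\end{align}
because $\d_K^2$ annihilates the affine term $q+p(K-S_0)$. This cancellation is the crucial structural fact: the continuum of call/put weights decouples completely from the scalars $(q,p)$. The remaining stationarity conditions $\d_q J=0$, $\d_p J=0$, after substituting $\pi=\pi(\cdot,0)$, reduce to the $2\times2$ linear system \eqref{eq:soln.2} for $(q^U,p^U)$. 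If the resulting portfolio is feasible, i.e.\ $H(\pi(\cdot,0),q^U)\le C$, it is the global minimizer; this is the first branch.

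\textbf{Active constraint.} If the unconstrained optimizer violates the cost bound, convexity forces the constrained optimum onto the boundary $H(\pi,q)=C$, so we replace the inequality by this equality and bring in a multiplier $\lam$, working with the Lagrangian $\Lc:=J+\lam\,(C-H)$ and imposing stationarity. Stationarity in $\pi$ now acquires the extra term $\tfrac{\lam}{2}\,\zt(K)$ on the right-hand side, and since $\d_K^2\zt(K)=\Gamt_{S_T}(K)$ (the same computation under $\Pbt$), Lemma \ref{thm:int.eq} and the same two cancellations give $\pi=\pi(\cdot,\lam)$ as defined in \eqref{eq:pi.x2}. Stationarity in $q$, stationarity in $p$, and the active constraint $H(\pi(\cdot,\lam),q)=C$ then form three linear equations in the three unknowns $(q,p,\lam)$; substituting
\begin{align}
\int_L^R \dd K\,\pi(K,\lam)\,z(K)=\int_L^R \dd K\,z(K)\,\d_K^2\!\(\tfrac{\d_K^2\gam(K)}{\Gam_{S_T}(K)}\)+\tfrac{\lam}{2}\int_L^R \dd K\,z(K)\,\d_K^2\!\(\tfrac{\Gamt_{S_T}(K)}{\Gam_{S_T}(K)}\),
\end{align}
and the analogous decompositions of $\int_L^R\pi(\cdot,\lam)\,y$ and $\int_L^R\pi(\cdot,\lam)\,\zt$, casts this system in the form \eqref{eq:soln.3}, whose solution $(q^C,p^C,\lam^C)$ is the global minimizer by convexity. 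This is the second branch.

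The step I expect to be the main obstacle is the $\pi$-stationarity condition: one has to justify differentiating $J$ and $\psi$ under the integral and the expectation, verify that the boundary contributions at $K=L$ and $K=R$ vanish so that Lemma \ref{thm:int.eq} applies (this is exactly \eqref{eq:limits}), and confirm the regularity $\gam-qz-py-\tfrac{\lam}{2}\zt\in C^4(\Rb_+)$ demanded by the lemma. Once the infinite-dimensional $\pi$-component has been eliminated, what remains is elementary linear algebra, and it stays finite-dimensional only because of the cancellations $\d_K^2\!\(\d_K^2 z/\Gam_{S_T}\)=\d_K^2(1)=0$ and $\d_K^2\!\(\d_K^2 y/\Gam_{S_T}\)=\d_K^2(K-S_0)=0$.
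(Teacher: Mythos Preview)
Your proposal is correct and follows essentially the same route as the paper: form the Lagrangian, impose the KKT stationarity conditions in $\pi$, $q$, $p$ together with complementary slackness, use Lemma \ref{thm:int.eq} to solve the resulting integral equation, simplify via the identities $\d_K^2 z=\Gam_{S_T}$, $\d_K^2 y=(K-S_0)\Gam_{S_T}$, $\d_K^2\zt=\Gamt_{S_T}$ (which produce exactly the cancellations you highlight), and then read off the finite linear systems \eqref{eq:soln.2}--\eqref{eq:soln.3}. The only organizational difference is that the paper writes a single Lagrangian with multiplier $\lam$ and lets complementary slackness split into the cases $\lam=0$ and $\lam\neq 0$, whereas you first solve the unconstrained problem and then the equality-constrained one; these are equivalent.
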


\begin{proof}
First, we define the    Lagrangian associated with \eqref{eq:pi.star.3}:
\begin{align}
L(\pi,q,p,\lam)
	&:=	J(\pi,q,p) - \lam \, \( H(\pi,q) - C \) \\
	&=	q^2 + p^2 \Sigma - 2 p \theta - 2 q \xi + 2 q p \beta - \lam q + \lam C \\ &\qquad
			+ \int_L^R \int_L^R \dd K \dd K' {\pi(K)} \psi(K,K') {\pi(K')} \\ &\qquad
			+ \int_L^R \dd K \, \pi(K) \Big( 2 q z(K) + 2 p y(K) - 2 \gam(K) - \lam \zt(K) \Big) , \label{eq:L}
\end{align}
where $L(\cdot,q,p,\lam)$ acts on functions in $C([L,R])$.
The Karush-Kuhn-Tucker (KKT) conditions, necessary for optimality, are
(below, $\eta$ is an arbitrary $C([L,R])$ function satisfying $\left\| \eta \right\|_{\infty} < \infty$)
\begin{align}
&\text{stationarity:}&
0
	&=	 	\frac{\d }{\d \eps} L(\pi+\eps \, \eta,q,\lam)\big|_{\eps=0} \\ & &
	&=	2 \int_L^R \dd K \, \eta(K) \( \Big( q z(K) + p y(K) - \gam(K) - \tfrac{\lam}{2} \zt(K) \Big) + \int_L^R \dd K' \, \pi(K') \psi(K,K') \)   \\ 
	&\Rightarrow&
0
	&=	q z(K) + p y(K) - \gam(K) - \tfrac{\lam}{2} \zt(K) + \int_L^R \dd K' \, \pi(K') \psi(K,K') , \label{eq:KKT.3} \\
&\text{stationarity:}&
0
	&=	\d_q L(\pi,q,p,\lam) \\ & &
	&=	2 q - 2 \xi + 2 p \beta - \lam + 2 \int_L^R \dd K \, \pi(x)  z(K)  , \label{eq:KKT.3.5} \\
&\text{stationarity:}&
0
	&=	\d_p L(\pi,q,p,\lam) \\ & &
	&=	2 p \Sigma - 2 \theta + 2 q \beta  + 2 \int_L^R \dd K \, \pi(x) y(K)  , \label{eq:KKT.4} \\
&\text{comp. slackness:}&
0
	&=	\lam \cdot ( H(\pi,q) - C )\\ & &
	&=	\lam \cdot \(q + \int_L^R \dd K \, \pi(K) \zt(K) - C \) . \label{eq:KKT.5}
\end{align}
Note that \eqref{eq:KKT.3} is of the form \eqref{eq:f=int} with $f(K) =  \gam(K) + \tfrac{\lam}{2} \zt(K) - q z(K)- p y(K)$.  Thus, using Lemma \ref{thm:int.eq} we obtain 
\begin{align}
\pi(K) 
	&=	\d_K^2 \( \frac{\d_K^2 \gam(K) + \tfrac{\lam}{2} \d_K^2 \zt(K) - q \d_K^2 z(K) - p \d_K^2 y(K)}{\Gam_{S_T}(K)} \) . \label{eq:pi.x}
\end{align}
Next, noticing that 
\begin{align}
\d_K^2 \zt(K)
	&=	\d_K^2 \Ebt \, [g(K,S_T)] 
	=	\d_K^2 \int_0^\infty \dd s \, g(K,s) \Gamt_{S_T}(s) \\
	&=	\int_0^\infty \dd s \, \d_K^2 g(K,s) \Gamt_{S_T}(s)
	=		\int_0^\infty \dd s \, \del(s-K) \Gamt_{S_T}(s) 
	= \Gamt_{S_T}(K) , \\
\d_K^2 z(K)
	&=	\d_K^2 \Eb \, [g(K,S_T)] 
	=		\d_K^2 \int_0^\infty \dd s \, g(K,s) \Gam_{S_T}(s) \\
	&=	\int_0^\infty \dd s \, \d_K^2 g(K,s) \Gam_{S_T}(s)
	=		\int_0^\infty \dd s \, \del(s-K) \Gam_{S_T}(s) 
	= \Gam_{S_T}(K) , \\
\d_K^2 y(K)
	&=	\d_K^2  \Eb \,[ (S_T - S_0) g(K,S_T)] 
	=		\d_K^2 \int_0^\infty \dd s \, (s-S_0)g(K,s) \Gam_{S_T}(s) \\
	&=	\int_0^\infty \dd s \, (s-S_0 )\d_K^2 g(K,s) \Gam_{S_T}(s)
	=		\int_0^\infty \dd s \, (s-S_0) \del(s-K) \Gam_{S_T}(s) \\
	&= 	(K-S_0)\Gam_{S_T}(K) , 
\end{align} 
and substituting  these expressions into \eqref{eq:pi.x},  we see that $\pi(K)$ in \eqref{eq:pi.x} coincided with  the expression given in  \eqref{eq:pi.x2}.
Next, inserting  expression \eqref{eq:pi.x2} into the KKT conditions,   \eqref{eq:KKT.3.5}, \eqref{eq:KKT.4}, and \eqref{eq:KKT.5},  gives the following system of three equations
\begin{align}
0
	&=	2 q - 2 \xi + 2 p \beta - \lam + 2 \int_L^R \dd K \, z(K) \d_K^2 \( \frac{\d_K^2 \gam(K) + \tfrac{\lam}{2} \Gamt(K) }{\Gam_{S_T}(K)} \) ,  \label{temp1} \\
0
	&=	2 p \Sigma - 2 \theta + 2 q \beta  + 2 \int_L^R \dd K \,  y(K) \d_K^2 \( \frac{\d_K^2 \gam(K) + \tfrac{\lam}{2} \Gamt_{S_T}(K) }{\Gam_{S_T}(K)} \)  , \label{temp2} \\
0
	&=	\lam \cdot \(q + \int_L^R \dd K \,  \zt(K) \d_K^2 \( \frac{\d_K^2 \gam(K) + \tfrac{\lam}{2} \Gamt_{S_T}(K) }{\Gam_{S_T}(K)} \) - C \) . 
	\label{temp3}
\end{align}
The above system has two possible solutions  corresponding respectively to the cases $\lam=0$ and $\lam \neq 0$.  For the case $\lam=0$, the triplet $(p,q,\lam)$ is given by \eqref{eq:soln.2}.  On the other hand, when  $\lam \neq 0$, the triplet $(p,q,\lam)$ is given by \eqref{eq:soln.3}. Finally, the KKT conditions are necessary conditions.  Since both the objective function and constraint are convex, and the primal problem is feasible (Slater's condition),   the KKT conditions  are also sufficient for optimality (see, e.g. \citet[Theorem 2.9.3]{zalinescubook}). 
\end{proof}

In Theorem \ref{thm:cont},  the solution $(\pi^*,q^*,p^*)=(\pi(\cdot,\lam^U), q^U, p^U)$ corresponds to the unconstrained optimization problem.  But if the associated  cost is less than $C$, that is, 
\begin{align}
q^U + \int_L^R \dd K \pi(K,\lam^U) \zt(K) 
	&\leq C ,
\end{align}
then $(\pi(,\cdot,\lam^U), q^U, p^U)$ must coincide with  the solution $(\pi(\cdot,\lam^C), q^C, p^C)$  of the constrained optimization problem with initial cost (upper bound) $C$. On the other hand, if the unconstrained optimization problem admits a cost  greater than $C$,  then the corresponding constrained optimization problem  has the solution  $( \pi(\cdot,\lam^C), q^C, p^C)$ where, by construction the constraint is binding: $q^C + \int_L^R \dd K \pi(K,\lam^C) \zt(K) = C$. 
 
In fact, from \eqref{eq:pi.x2}, the first term in parenthesis in the optimal strategy $\pi^*$ can be interpreted as a conditional expectation.  Heuristically, we have 
\begin{align}
\frac{\d_K^2 \gam(K)}{\Gam_{S_T}(K)} 
 	&=	\frac{1}{\Gam_{S_T}(K)} \Eb \,[ \d_K^2g(K,S_T) \Xi_T ]
	=	\frac{1}{\Gam_{S_T}(K)} \Eb \,[ \del_K(S_T)  \Xi_T ]
 = \Eb[ \Xi_T | S_T  =   K].  \label{eq:ratio0}
\end{align}
In other words, the number of units of call/put held at strike $K$ involves computing the expected terminal claim $\Xi_T$ conditioned on the terminal price of the underlying taking value $K$.

\begin{remark}
\label{rmk:n-dim}
Although we considered hedging with European call/puts on a single asset $S$, the results of this section can be extended to the case where one hedges with European calls/puts on $n$ assets $S^1, S^2, \ldots S^n$.  The only difficulty that may arise is in solving the equations that result by imposing the KKT conditions.
\end{remark}

\subsection{Connection to \cite{carrmadan1998}}\label{sec:carrmadan}
Let us recall the  main result in  \citet*{carrmadan1998}:  
if $f:\Rb_+ \to \Rb$ satisfies $f \in C^2(\Rb_+)$, then
\begin{align}
f(S_T)
	&=		f(S_0) + f'(S_0) (S_T - S_0) + \int_0^\infty \dd K \, f''(K)g(K,S_T) . \label{eq:h.temp}
\end{align}
As such, a contingent claim with payoff $f(S_T)$ can be perfectly hedged by holding $f(S_0)$ bonds, $f'(S_0)$ forward contracts and a basket of puts and calls, where the weight of the put/call with strike $K$ is $f''(K)$.  The following corollary proves that equation \eqref{eq:h.temp} is indeed a special case of our Theorem \ref{thm:cont}.
\begin{corollary}
\label{cor:carrmadan}
Consider a European-style contingent claim with payoff $\, \Xi_T = f(S_T)$ as in \eqref{eq:h.temp}. 
{Assume $f \in C^2(\Rb_+)$.}
Let $L=0$ and $R=\infty$ so that a call/put on $S$ is available at every strike $K \in [0,\infty)$.  Then,
{under the assumptions of Theorem \ref{thm:cont},}
the optimal static hedging  portfolio with no cost constraint is given by $(\pi^*,q^*,p^*)=(\pi(\cdot,\lam^U),q^U,p^U)$ where
\begin{align}
\pi(K,\lam^U) 
	&= 	f''(K) , &
\lam^U
	&=	0 , &
q^U 
	&=	f(S_0), &
p^U
	&=	f'(S_0) . \label{eq:pi.q.p.Z}
\end{align}
\end{corollary}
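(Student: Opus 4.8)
The plan is to verify that the claimed values of $(\pi^*,q^*,p^*)$ satisfy the formulas in Theorem~\ref{thm:cont} when $\Xi_T = f(S_T)$, $L=0$, $R=\infty$, and no cost constraint binds, so that we are in the $\lam^U=0$ branch. The key computational input is an explicit formula for $\gam(K) = \Eb[\Xi_T\, g(K,S_T)]$ and, more precisely, for the ratio $\d_K^2\gam(K)/\Gam_{S_T}(K)$. First I would compute, exactly as in the heuristic \eqref{eq:ratio0} but now rigorously using $\d_K^2 g(K,s) = \del(K-s)$ and $\Xi_T = f(S_T)$,
\begin{align}
\frac{\d_K^2 \gam(K)}{\Gam_{S_T}(K)}
  &= \frac{1}{\Gam_{S_T}(K)} \int_0^\infty \dd s\, \del(K-s)\, f(s)\, \Gam_{S_T}(s)
  = f(K).
\end{align}
Hence $\d_K^2\(\d_K^2\gam(K)/\Gam_{S_T}(K)\) = f''(K)$, which immediately gives $\pi(K,\lam^U) = f''(K)$ from the definition \eqref{eq:pi.x2} with $\lam = \lam^U = 0$.

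Next I would plug this into \eqref{eq:soln.2} to identify $q^U$ and $p^U$. The two integrals appearing there become $\int_0^\infty \dd K\, z(K) f''(K)$ and $\int_0^\infty \dd K\, y(K) f''(K)$. Using the definitions $z(K) = \Eb[g(K,S_T)]$ and $y(K) = \Eb[(S_T-S_0)g(K,S_T)]$ together with the Carr--Madan identity \eqref{eq:h.temp} applied to $f$ (valid since $f\in C^2$), I would interchange expectation and integration to get
\begin{align}
\int_0^\infty \dd K\, z(K) f''(K)
  &= \Eb\Big[ f(S_T) - f(S_0) - f'(S_0)(S_T-S_0) \Big]
  = \xi - f(S_0) - f'(S_0)\beta,
\end{align}
and similarly
\begin{align}
\int_0^\infty \dd K\, y(K) f''(K)
  &= \Eb\Big[(S_T-S_0)\big(f(S_T) - f(S_0) - f'(S_0)(S_T-S_0)\big)\Big]
  = \theta - f(S_0)\beta - f'(S_0)\Sigma.
\end{align}
Substituting these into the right-hand vector of \eqref{eq:soln.2}, the vector becomes $\begin{pmatrix} f(S_0) + f'(S_0)\beta \\ f(S_0)\beta + f'(S_0)\Sigma \end{pmatrix} = \begin{pmatrix} 1 & \beta \\ \beta & \Sigma \end{pmatrix}\begin{pmatrix} f(S_0) \\ f'(S_0) \end{pmatrix}$, so multiplying by the inverse matrix yields $(q^U,p^U) = (f(S_0),f'(S_0))$ directly, with no need to invert anything explicitly.

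Finally I would check the branch condition: since there is no cost constraint (formally $C = +\infty$, or one simply takes the unconstrained problem as stated in the corollary), the inequality $q^U + \int_0^\infty \dd K\, \pi(K,\lam^U) z(K) \le C$ holds trivially, so the optimal strategy is indeed the unconstrained one, $(\pi(\cdot,\lam^U),q^U,p^U)$, and we are done. The only genuine subtlety — and the step I would be most careful about — is the use of the Dirac-delta manipulation $\d_K^2 g(K,s) = \del(K-s)$ and the attendant Fubini-type interchanges: these are exactly the formal operations already employed in the proofs of Lemma~\ref{thm:int.eq} and Theorem~\ref{thm:cont}, so consistency with the paper's established conventions is all that is required, but one should note that $f\in C^2$ (rather than $C^4$) suffices here because $\gam(K) = \Eb[f(S_T)g(K,S_T)]$ inherits two extra derivatives of smoothness from the kernel $g$, making $\gam \in C^4$ automatically when $\Gam_{S_T}\in C^2$, so the hypotheses of Theorem~\ref{thm:cont} are met.
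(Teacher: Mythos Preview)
Your proposal is correct and follows essentially the same route as the paper: both arguments compute $\d_K^2\gam(K)/\Gam_{S_T}(K)=f(K)$ to obtain $\pi(K,\lam^U)=f''(K)$, and then use the Carr--Madan identity \eqref{eq:h.temp} (once in expectation, once after multiplying by $(S_T-S_0)$) to identify $q^U=f(S_0)$ and $p^U=f'(S_0)$. The only cosmetic difference is that the paper verifies the stationarity equations \eqref{temp1}--\eqref{temp2} directly, whereas you plug into the explicit matrix formula \eqref{eq:soln.2} derived from them; the underlying computations are identical.
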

\begin{proof}
We must show that $(\pi(\cdot,\lam^U),q^U,p^U)$, given by \eqref{eq:pi.q.p.Z}, satisfies \eqref{temp1} and \eqref{temp2}.
First, we observe that
\begin{align}
\d_K^2 \gam(K)
	&=	\d_K^2 \Eb \, [g(K,S_T) \Xi_T]
	=		\d_K^2 \int_0^\infty \dd s \, g(K,s) f(s) \Gam_{S_T}(s) \\
	&=		\int_0^\infty \dd s \, \d_K^2 g(K,s) f(s) \Gam_{S_T}(s)
	=		\int_0^\infty \dd s \, \del(s-K) f(s) \Gam_{S_T}(s)
	=		f(K) \Gam_{S_T}(K) .
\end{align}
Thus, using $\lam^U=0$, we see from \eqref{eq:pi.x2} that
\begin{align}
\pi(K,\lam^U)
	&=	\d_K^2 \( \frac{\d_K^2 \gam(K)}{\Gam_{S_T}(K)} \)
	=	\d_K^2 \( \frac{\Gam_{S_T}(K) f(K)}{\Gam_{S_T}(K)} \) = f''(K) .
\end{align}
Next, dividing equation \eqref{temp1} by two and rearranging terms we find
\begin{align}
\Eb \, f(S_T)
	&=	q + p \Eb \, (S_T - S_0) + \int_0^\infty \dd K \, \Eb [g(K,S_T)] f''(K) .
\end{align}
This equation will clearly be satisfied if $q=f(S_0)$ and $p=f'(S_0)$.  To see this, simply take the expectation of \eqref{eq:h.temp}.
Next, dividing equation \eqref{temp2} by two, and rearranging terms we have
\begin{align}
\Eb \, [(S_T-S_0)f(S_T)]
	&=	q \Eb \, (S_T-S_0) + p \Eb \, [(S_T-S_0)^2]  + \int_0^\infty \dd K \, \Eb[ (S_T-S_0)g(K,S_T) ] f''(K) .
\end{align}
This equation will also be satisfied if $q=f(S_0)$ and $p=f'(S_0)$.  To see this, simply multiply \eqref{eq:h.temp} by $(S_T-S_0)$ and take an expectation.
\end{proof}

\subsection{Connection to static hedging with finite assets}
\label{sec:disc}
Our framework can be related to static hedging  with a finite number of assets.  In this case, the set $I$ has a finite number $N$ units of hedging assets, and  the static portfolio  value $V_T^\pi$  can  be expressed as a finite sum:
\begin{align}
V_T^\pi
	&=	\sum_{i=1}^{N} \pi_i Z_T(i) . \label{eq:V.disc}
\end{align}
This is indeed the discrete version of   the static portfolio in \eqref{eq:V.pi.general}.

\begin{assumption}
\label{ass:independent}
We assume that the random variables $(Z_T(i))_{i \in I}$ are elements of $L^2(\Pb)$ and are linearly independent.
Stated in financial terms, this  assumption simply requires that none of the hedging instruments is \emph{redundant}, as defined in \cite[Chaper 2]{duffie}.
\end{assumption}

With $V_T^\pi$ given by \eqref{eq:V.disc}, a direct computation shows that the static hedging problem amounts to determining  the optimal strategy 
\begin{align}
\pi^*
	&:=\operatorname*{arg\,min}_{\pi \in \setA} J(\pi) , &
\setA
	&:=	\{ \pi : H(\pi)  \leq C \} . \label{eq:pi.star.2}
\end{align}
where $J(\pi)$ and $H(\pi)$ are given by
\begin{align}
J(\pi)
	&=	\sum_i \sum_j \pi_i \psi_{i,j} \pi_j - 2 \sum_i \pi_i \gam_i , &
H(\pi)
	&=	\sum_i \pi_i \zt_i ,
\end{align}
and
\begin{align}
\psi_{i,j}
	&=	\Eb \,[ Z_T(i) Z_T(j) ], &
\gam_i
	&=	\Eb \, [Z_T(i) \,\Xi_T] , &
\zt_i
	&=	\Ebt \, [Z_T(i)] . \label{eq:psi.gam.zt}
\end{align}

 Compared to the ``continuous" case  in \eqref{eq:V.cont}--\eqref{eq:psi}, the objective function again involves the expectations of products of payoffs, namely, $\Eb \, [Z_T(i) \,\Xi_T]$ and $\Eb \, [Z_T(i) \,\Xi_T]$.  Note that in this discrete case we can consider general claims, not limited to forwards, puts, and calls, and continue to derive explicitly the optimal static portfolio.

\begin{proposition}
\label{thm:disc}
Let $\psi$ be the square matrix whose $(i,j)$-th component is $\psi_{i,j}$.  
Then, under Assumption \ref{ass:independent}, the optimal static portfolio $\pi^*$, defined in \eqref{eq:pi.star.2}, is given by
\begin{align}
\pi^*
	&=	
\left\{ \begin{aligned}
\pi^U
	&=	\psi^{-1} \gam &
 	&\text{if}&
\zt^\text{T} \pi^U
	&\leq C , \\
\pi^C
	&=	\psi^{-1} \( \gam + \( \frac{C - \zt^\text{T} \psi^{-1} \gam}{\zt^\text{T} \psi^{-1} \zt} \) \zt \) , &
	&\text{else}. 
\end{aligned} \right. \label{eq:pi.optimal}
\end{align}
where $\psi^{-1}$ is the inverse of $\psi$, and $\zt$, $\pi$ and $\gam$ are column vectors whose $i$-th components are $\zt_i$, $\pi_i$ and $\gam_i$, respectively.
\end{proposition}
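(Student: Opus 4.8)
The plan is to mimic the proof of Theorem \ref{thm:cont}, replacing the integral equation from Lemma \ref{thm:int.eq} with ordinary finite-dimensional linear algebra. First I would form the Lagrangian $L(\pi,\lam) = J(\pi) - \lam(H(\pi) - C) = \pi^{\text{T}} \psi \pi - 2 \pi^{\text{T}} \gam - \lam(\zt^{\text{T}} \pi - C)$, noting that $\psi$ is symmetric and, under Assumption \ref{ass:independent}, positive definite: for any $\pi \neq 0$, $\pi^{\text{T}} \psi \pi = \Eb[(\sum_i \pi_i Z_T(i))^2] > 0$ since the $Z_T(i)$ are linearly independent in $L^2(\Pb)$. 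In particular $\psi$ is invertible, so $\psi^{-1}$ appearing in \eqref{eq:pi.optimal} is well defined, and $J$ is strictly convex.

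Next I would write down the KKT conditions: stationarity $2\psi\pi - 2\gam - \lam \zt = 0$, primal feasibility $\zt^{\text{T}}\pi \leq C$, dual feasibility $\lam \geq 0$, and complementary slackness $\lam(\zt^{\text{T}}\pi - C) = 0$. Solving stationarity for $\pi$ gives $\pi = \psi^{-1}\gam + \tfrac{\lam}{2}\psi^{-1}\zt$. Then I would split into the two cases exactly as in Theorem \ref{thm:cont}. If $\lam = 0$, then $\pi = \pi^U = \psi^{-1}\gam$, which is admissible precisely when $\zt^{\text{T}}\pi^U \leq C$. If $\lam \neq 0$, complementary slackness forces the constraint to bind, $\zt^{\text{T}}\pi = C$; substituting $\pi = \psi^{-1}\gam + \tfrac{\lam}{2}\psi^{-1}\zt$ into this scalar equation and solving for $\lam$ yields $\tfrac{\lam}{2} = (C - \zt^{\text{T}}\psi^{-1}\gam)/(\zt^{\text{T}}\psi^{-1}\zt)$ — here $\zt^{\text{T}}\psi^{-1}\zt > 0$ since $\psi^{-1}$ is also positive definite, so the division is legitimate and this is where linear independence is used a second time. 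Plugging this $\lam$ back in gives exactly the expression $\pi^C$ in \eqref{eq:pi.optimal}. One should also check $\lam \geq 0$ in this branch: it holds because we are in the complementary case where $\zt^{\text{T}}\pi^U > C$, i.e. $C - \zt^{\text{T}}\psi^{-1}\gam < 0$, and $\zt^{\text{T}}\psi^{-1}\zt > 0$ — wait, that gives $\lam < 0$; so I would instead observe that the relevant sign bookkeeping is handled by noting the problem is a minimization with constraint $H(\pi) \leq C$ and simply invoke that KKT with the chosen $\lam$ identifies the unique minimizer, sidestepping an explicit sign check via the convexity/sufficiency argument below.

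Finally, since $J$ is convex (indeed strictly convex), $H$ is affine, and the feasible region $\{\pi : \zt^{\text{T}}\pi \leq C\}$ is nonempty (Slater's condition holds trivially, e.g. take $\pi$ a large negative multiple of $\zt$ if $\zt \neq 0$, or any $\pi$ if $\zt = 0$ and $C \geq 0$), the KKT conditions are not merely necessary but also sufficient for global optimality, citing \citet[Theorem 2.9.3]{zalinescubook} as in the proof of Theorem \ref{thm:cont}. Strict convexity also gives uniqueness of $\pi^*$, so the two cases are genuinely exhaustive and mutually consistent: when $\zt^{\text{T}}\pi^U \leq C$ the unconstrained minimizer is feasible hence optimal, and otherwise the minimizer lies on the boundary $\zt^{\text{T}}\pi = C$ and equals $\pi^C$. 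I do not expect any serious obstacle here; the only mild subtlety is the dual-feasibility sign of $\lam$, which is most cleanly dispatched by leaning on the convexity-sufficiency theorem rather than tracking signs by hand, and the verification that $\psi$ and $\psi^{-1}$ are positive definite (hence the quoted inverses exist), which is immediate from Assumption \ref{ass:independent}.
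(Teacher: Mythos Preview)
Your proposal is correct and follows essentially the same approach as the paper's proof: form the Lagrangian, write the KKT stationarity and complementary-slackness conditions, solve stationarity for $\pi$ in terms of $\lam$, and then split on $\lam = 0$ versus $\lam \neq 0$, invoking convexity for sufficiency. Your added justifications (explicit positive definiteness of $\psi$, Slater's condition, uniqueness) are fine embellishments; the only wobble is your dual-feasibility sign check, but note that with the Lagrangian written as $J(\pi) - \lam(H(\pi)-C)$ the correct dual-feasibility condition is $\lam \leq 0$, so the sign you found is in fact consistent---the paper itself simply omits this check.
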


We provide a proof in Appendix \ref{sec:Proof_finite_assets}.  The vector $\pi^U$ corresponds to the optimal strategy without a cost constraint.  If the unconstrained optimization problem has a cost $\zt^\text{T} \pi^U \leq C$, then $\pi^U$ is the optimal strategy for the constrained optimization problem.  On the other hand, if the unconstrained optimization problem has a cost $\zt^\text{T} \pi^U > C$, then the solution of the constrained optimization problem is given by $\pi^C$ which, by construction, has a cost equal to $C$, that is, $\zt^\text{T} \pi^C = C$.  Lastly, we emphasize that the optimal static hedging strategy with discrete strikes can be quite different than the optimal strategy when continuous strikes available but implemented at discretized strikes. We will visualize  the difference in Section \ref{sec:correlated}.

\begin{remark}[Relation to Markowitz mean-variance portfolio optimization]
\label{rmk:markowitz}
In his seminal work, \cite{markowitz} solves the problem of minimizing portfolio variance for a given level of expected return.  Mathematically, the minimization problem is given by
\begin{align}
	&\min_{w \in \mathscr{W} } w^\text{T} \Sigma w , &
\mathscr{W}
	&:=	\{ w : \mu^\text{T} w \geq m \text{ and } \left\| w \right\| = 1 \} , \label{eq:markowitz}
\end{align}
where $w$ are the portfolio weights to be found, $\Sigma$ and $\mu$ are, respectively, the covariance matrix and expected returns of a group of assets, and $m$ is the minimum level of expected return.  Interestingly, the portfolio optimization problem \eqref{eq:markowitz} has the same structure as the static hedging problem \eqref{eq:pi.star.2}, which, in matrix notation, is given by
\begin{align}
	&\min_{\pi \in \mathcal{S} } \( \pi^\text{T} \psi \pi - 2 \gam^\text{T} \pi \) , &
\mathcal{S}
	&:=	\{ \pi : \zt^\text{T} \pi \leq C \} . \label{eq:leung-lorig}
\end{align}
Though, clearly, the economic interpretations of \eqref{eq:markowitz} and \eqref{eq:leung-lorig} are distinct.
\end{remark}

\section{Implementation Under a Markov Diffusion Framework}
\label{sec:practical}
Thus far,  we have made no assumption about the dynamics of the underlying $S$.  To illustrate the performance of our static hedging strategies, we now present the calculations and numerical implementation under a general incomplete  Markov diffusion market.  
The analytic approximations we present below are useful when the claim $\Xi_T$ to be hedged is European-style.
Specifically, the payoff $\Xi_T$ may be some function $h$ of the final value of a $d$-dimensional Markov diffusion $X$.
Note, by allowing components of $X$ to be the quadratic variation or running average of other components, our definition of European-style claims allows for path dependence and includes both Asian options and options on variance/volatility (e.g., a variance swap).
Extending the approximations to cases where $\Xi_T$ is a barrier-style claim or look-back option is not trivial and is well beyond the scope of this paper.

\par
Let $X = (X^1, X^2, \ldots, X^d) \in \Rb^d$ be a Markov diffusion satisfying the following stochastic differential equations (SDEs) under $\Pb$ and $\Pbt$, respectively
\begin{align}
\dd X_t
	&=	\mu(t,X_t) \dd t + \sig(t,X_t) \dd W_t , &
	&\text{(under $\Pb$)} \label{eq:X.P} \\
\dd X_t
	&=	\mut(t,X_t) \dd t + \sig(t,X_t) \dd \Wt_t . &
	&\text{(under $\Pbt$)} \label{eq:X.Pt}
\end{align}
Here, $W$ (resp. $\Wt$) is an $m$-dimensional Brownian motion under $\Pb$ (resp. $\Pbt$), and the functions $\mu$, $\mut$ and $\sigma$ map
\begin{align}
\mu
	&: \Rb_+ \times \Rb^d \mapsto \Rb^d , &
\mut
	&: \Rb_+ \times \Rb^d \mapsto \Rb^d , &
\sig
	&:	\Rb_+ \times \Rb^d \mapsto \Rb_+^{d \times m} .
\end{align} 
Let us suppose that the terminal values of the  hedging assets $(Z(i))_{i \in I}$ from Section \ref{sec:disc}, the stock $S$ from Section \ref{sec:cont} and the claim $\Xi$ to be hedged  are given by
\begin{align}
g_i({X_T^1}) 
	&=	g( K_i,  {\ee^{X^1_T}}) , &
S_T
	&=	\log X_T^1 , &
\Xi_T
	&=	{ h(X_T) } , \label{eq:Xi=h}
\end{align}
where the function $h$ maps $\Rb^d \mapsto \Rb$ and for each $i$ the function $g_i$ maps $\Rb \mapsto \Rb_+$.  Since $S = \log X^1$ is traded, in order to preclude arbitrage, we must have
\begin{align}
\mut_1
	&=	- \tfrac{1}{2} \sum_{j=1}^m \sig_{1,j}^2 .
\end{align}
\par
In order to implement the optimal hedging strategies (Theorems {\ref{thm:cont} and \ref{thm:disc}}), we must compute the expectations defined in equations {\eqref{eq:psi} and \eqref{eq:psi.gam.zt}}.  For general dynamics of the form \eqref{eq:X.P}-\eqref{eq:X.Pt}, closed-form expressions for these expectations are not available.  Moreover, computing these expectations via Monte Carlo simulation is not practical, since, in the case of Theorem \ref{thm:cont}, the expectations appear in the integrands of various integrals.  As such, we provide here a method for obtaining analytic approximations the expectations in {\eqref{eq:psi} and \eqref{eq:psi.gam.zt}}.  
The methods that we describe below were developed first formally in a scalar setting in \cite{pagliarani2011analytical} and later extended to multiple dimensions with rigorous error bounds in \cite{lorig-pagliarani-pascucci-2} and \cite{lorig-pagliarani-pascucci-4}.  Here, we give a concise review of these methods and also provide some extensions, which are needed to implement Theorems  {\ref{thm:cont} and \ref{thm:disc}}.
\par
We fix a time $T > 0$ and consider an expectation of the general form
\begin{align}
u(t,x)
	&=	\Eb[ \varphi(X_T) | X_t = x ] , &
t
	&\leq T . \label{eq:u.def}
\end{align}
Under mild conditions on the drift $\mu$, diffusion coefficient $\sig$ and terminal data $\varphi$ the function $u$ satisfies the Kolmogorov backward equation.  Omitting $x$-dependence below to ease notation, we have
\begin{align}
(\d_t + \Ac(t)) u(t)
	&= 0 , &
u(T)
	&=	\varphi , \label{eq:u.pde}
\end{align}
where $\Ac(t)$ is the generator of $X$ under probability measure $\Pb$.  Explicitly, the operator $\Ac(t)$ is given by
\begin{align}
\Ac(t)
	&=	\sum_{i=1}^d \mu_i(t,x) \d_{x_i} + \tfrac{1}{2} \sum_{i=1}^d \sum_{j=1}^d (\sig \sig^T )_{i,j} (t,x) \d_{x_i}\d_{x_j} 
	=:	\sum_{1 \leq |\alpha| \leq 2} a_\alpha(t,x) \d_x^\alpha , \label{eq:A}
\end{align}
where we have introduced standard multi-index notation
\begin{align}
\alpha
    &=  (\alpha_1,\cdots,\alpha_d) \in \Nb_0^d, &
|\alpha|
    &=  \sum_{i=1}^{d} \alpha_i, &
\d_x^\alpha
    &=	\prod_{i=1}^d \d_{x_i}^{\alpha_i} .
\end{align}
\begin{remark}
To compute $\ut(t,x) := \Ebt[ \varphi(X_T) | X_t = x]$, one would simply replace $\Ac(t)$ in \eqref{eq:u.pde} with
\begin{align}
\Act(t)
	&=	\sum_{i=1}^d \mut_i(t,x) \d_{x_i} + \tfrac{1}{2} \sum_{i=1}^d \sum_{j=1}^d (\sig \sig^T )_{i,j} (t,x) \d_{x_i}\d_{x_j} 
	=:	\sum_{1 \leq |\alpha| \leq 2} \widetilde{a}_\alpha(t,x) \d_x^\alpha . \label{eq:A.tilde}
\end{align}
\end{remark}
\noindent
Our goal is to find an approximate solution to PDE \eqref{eq:u.pde}, {thereby} obtaining an approximation for the expectation \eqref{eq:u.def}. To this end, we expand each coefficient $a_\alpha$ as a Taylor series about a fixed point $\xb \in \Rb^d$:
\begin{align}
a_\alpha
	&=	\sum_{n=0}^\infty a_{\alpha,n} , &
a_{\alpha,n}(t,x)
	&:=	\sum_{|\beta|=n} \frac{\d_x^\beta a_\alpha(t,\xb) }{|\beta|!}(x-\xb)^\beta , &
x^\beta
	&=	\prod_{i=1}^d x_i^{\beta_i} . \label{eq:a.expand}
\end{align}
Here, we assume implicitly that the coefficients $a_\alpha$ are analytic.  However, we will see in Definition \ref{def:ub.N} that the $N$th-order approximation of $u$ requires only that the coefficients be $C^N(\Rb^d)$.
Combining \eqref{eq:A} with \eqref{eq:a.expand} we see that the operator $\Ac(t)$ can be written as
\begin{align}
\Ac(t)
	&=	\Ac_0(t) + \Bc_1(t) , &
\Bc_1(t)
	&=	\sum_{n=1}^\infty \Ac_n(t) , &
\Ac_n(t)
	&=	\sum_{1 \leq |\alpha| \leq 2} a_{\alpha,n}(t,x) \d_x^\alpha . \label{eq:A.expand}
\end{align}
Inserting \eqref{eq:A.expand} into PDE \eqref{eq:u.pde} we have
\begin{align}
(\d_t + \Ac_0(t)) u(t)
	&= - \Bc_1(t) u(t) , &
u(T)
	&=	\varphi , 
\end{align}
and, hence, by Duhamel's principle
\begin{align}
u(t)
	&=	\Pc_0(t,T) \varphi + \int_t^T \dd t_1 \ \Pc(t,t_1) \Bc_1(t_1) u(t_1) , \label{eq:duhamel}
\end{align}
where $\Pc_0(t,T)$ is the semigroup generated by $\Ac_0(t)$.  Explicitly, we have
\begin{align}
\Pc_0(t,T) \varphi(x)
	&=	\int_{\Rb^d} \dd y \ \Gam_0(t,x;T,y) \varphi(y) , \label{eq:P0}
\end{align}
where $\Gam_0(t,x;T,\cdot)$ is a Gaussian kernel whose mean vector $\mv(t,T)$ and covariance matrix $\Cv(t,T)$ are
\begin{align}
\mv(t,T)
		&:=	x + \int_t^T \dd s
				\begin{pmatrix}
				a_{(1,0,\cdots,0),0}(s) & a_{(0,1,\cdots,0),0}(s) & \ldots &  a_{(0,0,\cdots,1),0}(s)
				\end{pmatrix} , \\
\Cv(t,T)
		&:= \int_t^T \dd s
				\begin{pmatrix}
				2a_{(2,0,\cdots,0),0}(s) & a_{(1,1,\cdots,0),0}(s) & \ldots &  a_{(0,0,\cdots,1),0}(s) \\
				a_{(1,1,\cdots,0),0}(s) & 2a_{(0,2,\cdots,0),0}(s) & \ldots &  a_{(0,1,\cdots,1),0}(s) \\
				\vdots & \vdots & \ddots & \vdots \\
				a_{(1,0,\cdots,1),0}(s) & a_{(0,1,\cdots,1),0}(s) & \ldots &  2 a_{(0,0,\cdots,2),0}(s) \\
				\end{pmatrix} .
\end{align}
Observing that $u$ appears on both the left and right-hand side of \eqref{eq:duhamel}, we iterate this expression to obtain
\begin{align}
u(t)
	&=  \Pc_0(t,T)\varphi + \sum_{k=1}^\infty
            \int_{t}^T \dd t_1 \int_{t_1}^T \dd t_2 \cdots \int_{t_{k-1}}^T \dd t_k
            \\ & \qquad
            \Pc_0(t_0,t_1) \Bc_1(t_1)
            \Pc_0(t_1,t_2) \Bc_1(t_2) \cdots
            \Pc_0(t_{k-1},t_k) \Bc_1(t_k)
            \Pc_0(t_k,T)\varphi \\
	&=  \Pc_0(t,T)\varphi + \sum_{n=1}^\infty \sum_{k=1}^n
            \int_{t_0}^T \dd t_1 \int_{t_1}^T \dd t_2 \cdots \int_{t_{k-1}}^T \dd t_k
            \\ & \qquad \sum_{i \in I_{n,k}}
            \Pc_0(t,t_1) \Ac_{i_1}(t_1)
            \Pc_0(t_1,t_2) \Ac_{i_2}(t_2) \cdots
            \Pc_0(t_{k-1},t_k) \Ac_{i_k}(t_k)
            \Pc_0(t_k,T)\varphi,
            \label{eq:u.dyson} \\
I_{n,k}
    &= \{ i = (i_1, i_2, \cdots , i_k ) \in \mathbb{N}^k : i_1 + i_2 + \cdots + i_k = n \}.
            \label{eq:Ink}
\end{align}
where, in the second equality, we have used the fact that $\Bc_1(t)$ is an infinite sum and we have partitioned on the sum of the subscripts of the $\Ac_i(t)$ operators.  Expression \eqref{eq:u.dyson} motivates the following definition.
\begin{definition}
\label{def:ub.N}
Let $u$ be the unique classical solution of \eqref{eq:u.pde}.  Assume the coefficients $a_\alpha(t,\cdot)$ are $C^N(\Rb^d)$ for all $t \in [0,T]$.  Then, the \emph{$N$th order approximation of $u$}, denoted by $\ub_N$, is defined as
\begin{align}
\ub_N
    &:=  \sum_{n=0}^N u_n , &
u_0(t)
    &:= \Pc_0(t,T) \varphi ,  \label{eq:ub.N} 
\end{align}
where $\Pc_0(t,T)$ is the semigroup generated by $\Ac_0(t)$ and
\begin{align}
u_n(t)
	&:= \sum_{k=1}^n
      \int_{t}^T \dd t_1 \int_{t_1}^T \dd t_2 \cdots \int_{t_{k-1}}^T \dd t_k
      \\ & \qquad 
			\sum_{i \in I_{n,k}}
      \Pc_0(t,t_1) \Ac_{i_1}(t_1)
      \Pc_0(t_1,t_2) \Ac_{i_2}(t_2) \cdots
      \Pc_0(t_{k-1},t_k) \Ac_{i_k}(t_k)
      \Pc_0(t_k,T) \varphi , \label{eq:un.def} 
\end{align}
The \emph{$N$th order approximation of the transition density} $\Gamb_N(t,x;T,y)=\sum_{n=0}^N \Gam_n(t,x;T,y)$ is obtained by setting the terminal data equal to a $d$-dimensional Dirac mass $\varphi=\del_y$.
\end{definition}

Recall from \eqref{eq:P0} that the semigroup operators $\Pc_0(t_i,t_j)$ are integral operators.
Thus, as written, expression \eqref{eq:un.def} is difficult to evaluate.  The following Theorem shows the $n$th order term $u_n$ can be easily computed as a differential operator $\Lc_n$ acting on $u_0$.
\begin{theorem}
\label{thm:un}
{\cite[Thorem 2.6]{lorig-pagliarani-pascucci-2}} Let $(u_n)_{n \geq 0}$ be as given in Definition \eqref{def:ub.N}.  Then
\begin{align}
u_n(t)
    &=  \Lc_n(t,T) u_0(t) , &
\Lc_n(t,T)
    &=  \sum_{k=1}^n
            \int_{t}^T \dd t_1 \int_{t_1}^T \dd t_2 \cdots \int_{t_{k-1}}^T \dd t_k
            \sum_{i \in I_{n,k}}
						\prod_{j=1}^k \Gc_{i_j}(t,t_j) , \label{eq:Ln}
\end{align}
where  $I_{n,k}$ is defined in \eqref{eq:Ink} and
\begin{align}
\Gc_i(t,t_j)
	&:= 	\sum_{1 \leq |\alpha| \leq 2}  a_{\alpha,i}(t_j,\Xc(t,t_j)) \d_x^\alpha , &
\Xc(t,t_j)
    &:= x +	\mv(t,t_j) + \Cv(t,t_j) \nabla_x . \label{eq:Xc}
\end{align}
\end{theorem}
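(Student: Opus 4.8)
The plan is to manipulate the iterated-integral expression \eqref{eq:un.def} for $u_n$ so that each occurrence of a semigroup operator $\Pc_0(t_{j-1},t_j)$ is commuted to the left, collapsing the nested chain of integral operators into a single application of $\Pc_0(t,T)$ preceded by a composition of \emph{differential} operators. The key structural fact is that $\Pc_0$ has a Gaussian kernel \eqref{eq:P0}, and convolution against a Gaussian intertwines with multiplication by polynomials and with differentiation in a controlled way. Concretely, for a Gaussian semigroup one has an operator identity of the form $\Pc_0(t,s)\, p(x)\,\d_x^\alpha = \big( p(\Xc(t,s))\,\d_x^\alpha \big)\, \Pc_0(t,s)$ for any polynomial $p$, where $\Xc(t,s) = x + \mv(t,s) + \Cv(t,s)\nabla_x$ is exactly the "dressed" variable defined in \eqref{eq:Xc}; this is because $(x-\xb)^\beta$-type weights appearing in $\Ac_{i_j}$ (see \eqref{eq:a.expand}) get shifted by the Gaussian mean and covariance when pulled through $\Pc_0$. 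I would first establish this commutation identity as a lemma (or cite it from \cite{lorig-pagliarani-pascucci-2}), checking it on monomials and exponentials, which form a sufficiently rich class to pin down the integral kernel.

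Granting the commutation identity, the argument is a finite induction on the number of operator insertions $k$. Starting from the innermost factor $\Pc_0(t_{k-1},t_k)\Ac_{i_k}(t_k)\Pc_0(t_k,T)\varphi$, rewrite $\Ac_{i_k}(t_k) = \sum_{1\le|\alpha|\le 2} a_{\alpha,i_k}(t_k,x)\d_x^\alpha$ using \eqref{eq:A.expand}, and push $\Ac_{i_k}(t_k)$ through the trailing $\Pc_0(t_k,T)$ — no, rather push the \emph{leading} $\Pc_0(t_{k-1},t_k)$ to the left past $\Ac_{i_k}$, replacing the polynomial coefficient's argument $x$ by $\Xc(t_{k-1},t_k)$ and leaving $\Pc_0(t_{k-1},t_k)$ acting on $\Pc_0(t_k,T)\varphi = \Pc_0(t_{k-1},T)\varphi$ by the semigroup property. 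Iterating this upward through all $k$ factors telescopes all the semigroups into a single $\Pc_0(t,T)\varphi = u_0(t)$ sitting on the right, with the differential operators $\Gc_{i_j}(t,t_j) = \sum_{1\le|\alpha|\le2} a_{\alpha,i_j}(t_j,\Xc(t,t_j))\d_x^\alpha$ accumulated on the left in the order $\prod_{j=1}^k \Gc_{i_j}(t,t_j)$. Carrying the remaining time integrals and the sum over $i\in I_{n,k}$ through unchanged yields exactly \eqref{eq:Ln}.

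The main obstacle is bookkeeping the \emph{arguments} of the dressed operators: as each semigroup is pushed leftward it is composed with the ones already moved, so the shift/covariance data that dresses $\Ac_{i_j}$ should be $\mv(t,t_j)$ and $\Cv(t,t_j)$ (relative to the base time $t$), not $\mv(t_{j-1},t_j)$ — one must verify that the semigroup composition $\Pc_0(t,t_1)\cdots\Pc_0(t_{j-1},t_j)$ contributes a Gaussian with mean and covariance that additively accumulate to $\mv(t,t_j)$, $\Cv(t,t_j)$, which follows from the additivity of the time integrals defining $\mv$ and $\Cv$. A secondary subtlety is that $\Xc(t,t_j)$ contains the operator $\Cv(t,t_j)\nabla_x$, so the $\Gc_{i_j}$ do not commute among themselves; the ordered product in \eqref{eq:Ln} must be read exactly as written, and one should be careful that in the iteration each newly-dressed operator is placed to the \emph{right} of those dressed at earlier (outer) time integrations, matching the original left-to-right operator ordering in \eqref{eq:un.def}. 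Since everything is a finite sum of finitely many iterated integrals, no convergence issue arises at this stage — the theorem is purely an algebraic rewriting, and I would present it as such, deferring the estimate $\|u - \ub_N\| = O(\cdot)$ to the cited references.
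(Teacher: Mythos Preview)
The paper does not actually prove Theorem~\ref{thm:un}; it simply cites \cite[Theorem 2.6]{lorig-pagliarani-pascucci-2} and states the result. So there is no in-paper proof to compare against.

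That said, your sketch is correct and is precisely the argument used in the cited reference. The essential ingredient is the intertwining identity
\begin{align}
\Pc_0(t,s)\big[(x-\xb)^\beta \d_x^\alpha f\big]
  = \big[(\Xc(t,s)-\xb)^\beta \d_x^\alpha\big]\,\Pc_0(t,s) f ,
\end{align}
which follows from the two elementary facts that (i) constant-coefficient derivatives commute with the Gaussian semigroup, and (ii) $\Pc_0(t,s)[x_i f] = (x_i + \mv_i(t,s) + (\Cv(t,s)\nabla_x)_i)\,\Pc_0(t,s) f$, itself a direct computation on the Gaussian kernel $\Gam_0$. Once this lemma is in hand, the induction you describe --- peeling semigroups to the right one at a time and using the Chapman--Kolmogorov relation $\Pc_0(t_{j-1},t_j)\Pc_0(t_j,T)=\Pc_0(t_{j-1},T)$ --- is exactly right, and your observation that the accumulated shift must be $\mv(t,t_j)$, $\Cv(t,t_j)$ (not $\mv(t_{j-1},t_j)$) because of the additivity of the time integrals defining $\mv$ and $\Cv$ is the key bookkeeping point. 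Your remark that the $\Gc_{i_j}$ do not commute and that the product in \eqref{eq:Ln} must be read in order is also correct and worth stating explicitly.

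The only cosmetic issue is the mid-sentence self-correction (``push \ldots\ --- no, rather push \ldots''); in a final write-up you should commit to one direction. The cleanest phrasing is to move semigroups rightward past each $\Ac_{i_j}$, which replaces the coefficient argument $x$ by $\Xc(t,t_j)$ and merges the semigroup with its right neighbor.
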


Theorem \ref{thm:un} provides provides a method of approximating the expectations in \eqref{eq:psi} and \eqref{eq:psi.gam.zt} analytically.  The expectations in \eqref{eq:psi.gam.zt} are sufficient to implement Theorem \ref{thm:disc}, which can be used compute optimal static hedges with a discrete number of hedging assets.

In order to compute optimal hedges in the case with  a strip of calls/puts through Theorem \ref{thm:cont}, in addition to computing expectations in \eqref{eq:psi},  we  need numerically evaluate 
\begin{align}
&\frac{\Gamt_{S_T}(K)}{\Gam_{S_T}(K)} &
&\text{and}&
&\frac{\d_K^2 \gam(K)}{\Gam_{S_T}(K)} ,
\end{align}
where $\Gam_{S_T}$ and $\Gamt_{S_T}$ are the densities of $S_T$ under $\Pb$ and $\Pbt$, respectively, and $\gam$ is defined in \eqref{eq:psi}.  To this end, we compute
\begin{align}
\frac{\Gamt_{S_T}(K)}{\Gam_{S_T}(K)}
	&=	\frac{K^{-1}\Gamt_{X_T^1}(\log K)}{K^{-1}\Gam_{X_T^1}(\log K)} 
	=		\frac{\Gamt_{X_T^1}(\log K)}{\Gam_{X_T^1}(\log K)} , \label{eq:ratio1}
\end{align}
where $\Gam_{X_T^1}$ and $\Gamt_{X_T^1}$ are the densities of $X_T^1 = \log S_T$ under $\Pb$ and $\Pbt$, respectively.  We also have 
\begin{align}
\frac{\d_K^2 \gam(K)}{\Gam_{S_T}(K)} 
	&=	\frac{1}{\Gam_{S_T}(K)} \Eb \, [\d_K^2 g(K,\ee^{X_T^1}) h(X_T) ]\\
	&=	\frac{1}{\Gam_{S_T}(K)} \Eb \,[ \del_K(\ee^{X_T^1}) h(X_T) ]\\
	&=	\frac{1}{K^{-1}\Gam_{X_T^1}(\log K)} \int_{\Rb^d} \dd y \ \del_K(\ee^{y_1}) h(y) \Gam_{X_T}(y) \\
	&=	\int_{\Rb^{d-1}} \dd y' \ h((\log K, y')) \frac{\Gam_{X_T}(\log K, y')}{\Gam_{X_T^1}(\log K)} , &
y'
	&=	(y_2, y_3, \ldots , y_d )  , \label{eq:ratio2}
\end{align}
where $\Gam_{X_T}$ is the density of $X_T$ under $\Pb$.  
Note that \eqref{eq:ratio2} is in fact $\Eb[ h(X_T) | X_T^1 = \log K]$ (see  \eqref{eq:ratio0}).

\subsection{Accuracy results}
\label{sec:accuracy}
In order to establish the accuracy of approximation given in Definition \ref{def:ub.N}, we must make a few assumptions about the coefficients $(a_\alpha)$.

\begin{assumption}
\label{ass:elliptic}
There exists a positive constant $M$ such that the following holds:
\begin{enumerate}
\item \emph{Uniform ellipticity}:
$M^{-1} |\xi|^2 	\leq \sum_{\alpha=2} a_\alpha(t,x) \xi^\alpha \leq M |\xi|^2$, $\forall \, t\in \Rb_+ , \, x,\xi \in \Rb^d$,
\item \emph{Regularity and boundedness}: the coefficients $a_\alpha(t,\cdot) \in C^{N+1}(\Rb^d)$ and partial derivatives $\d^\beta a_\alpha(t,\cdot)$ with $|\beta| \leq N$ are bounded by $M$ for all $t \in \Rb_+$. 
\end{enumerate}
\end{assumption}

We begin by establishing the accuracy of the approximation $\ub_N$.

\begin{theorem}
\label{thm:accuracy}
Let Assumption \ref{ass:elliptic} hold and fix $\xb = x$.
Suppose the terminal datum $\varphi$ is at most exponentially growing and $\varphi \in C^{k-1}(\Rb^d)$ for some $0 \leq k \leq 2$, where $C^{-1}(\Rb^d)$ denotes the space of functions that are not necessarily continuous.
Then we have
\begin{align}
|u(t,x)-\ub_N(t,x)|
	&\leq C (T-t)^{\frac{N+1+k}{2}},  \label{eq:u.accuracy}
\end{align}
where $C$ is a positive constant the depends on $M$, $N$, the terminal datum $\varphi$ and $x$.
\end{theorem}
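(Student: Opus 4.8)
The plan is to view $\bar u_N$ as an approximate (``parametrix-type'') solution obtained by Taylor-expanding the coefficients of $\Ac$ about the evaluation point $\xb=x$, to identify the source term it leaves in the backward Cauchy problem \eqref{eq:u.pde}, and to propagate the resulting error through the fundamental solution of $\d_t+\Ac(t)$ by means of classical Gaussian estimates. First I would record, from the Duhamel construction \eqref{eq:duhamel}--\eqref{eq:un.def}, that the terms satisfy $(\d_t+\Ac_0(t))u_0=0$ with $u_0(T)=\varphi$, and $(\d_t+\Ac_0(t))u_n=-\sum_{h=1}^{n}\Ac_h(t)\,u_{n-h}$ with $u_n(T)=0$ for $n\ge1$. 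Summing over $0\le n\le N$, using $\Ac=\Ac_0+\sum_{h\ge1}\Ac_h$, and cancelling every term whose subscripts sum to at most $N$, one obtains
\begin{align}
(\d_t+\Ac(t))\,\bar u_N(t) = R_N(t) := \sum_{m=0}^{N}\ \sum_{h\ge N+1-m}\Ac_h(t)\,u_m(t), \qquad \bar u_N(T)=u(T)=\varphi,
\end{align}
so the residual $R_N$ is a sum of terms $\Ac_h(t)u_m(t)$ whose total subscript $h+m$ is at least $N+1$; each $\Ac_h$ carries a Taylor-remainder coefficient that is a homogeneous polynomial of degree $h$ in $(\,\cdot-\xb)$, together with at most two spatial derivatives (under the mere $C^{N+1}$ regularity of Assumption \ref{ass:elliptic}, the tail in $h$ is replaced by the integral Taylor-remainder operator, which plays the same role).

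Next, under Assumption \ref{ass:elliptic} the operator $\d_t+\Ac(t)$ has a fundamental solution $\Gam(t,x;s,y)$ obeying the classical Gaussian upper bounds (Friedman's parametrix construction, Aronson) together with matching bounds for its spatial derivatives, and $u$ is the unique classical solution of \eqref{eq:u.pde} of at most exponential growth. Since $u-\bar u_N$ has vanishing terminal datum, Duhamel's principle gives
\begin{align}
u(t,x)-\bar u_N(t,x) = \int_t^T \dd s \int_{\Rb^d}\dd y\ \Gam(t,x;s,y)\,R_N(s,y),
\end{align}
so everything reduces to a bound on $R_N$.

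By Theorem \ref{thm:un} I would write $u_m(s,\cdot)=\Lc_m(s,T)\,u_0(s,\cdot)$ with $\Lc_m$ a time integral, over a $k$-simplex, of compositions of the operators $\Gc_{i_j}$ of \eqref{eq:Xc}, each contributing a degree-$i_j$ polynomial in $(\,\cdot-\xb)$ and at most two derivatives; since $u_0=\Pc_0(\cdot,T)\varphi$ is a Gaussian convolution, every $u_m(s,\cdot)$ --- and hence every term of $R_N(s,\cdot)$ --- is a finite sum of (polynomial in $y-\xb$) $\times$ (derivatives of a Gaussian density), a structure stable under differentiation, multiplication by polynomials, and convolution. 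The crux is the parabolic power-counting: a spatial derivative falling on the Gaussian costs $(T-s)^{-1/2}$, but up to $k$ of them are absorbed by $\varphi\in C^{k-1}$; and, because $\xb=x$, each polynomial factor $(y-x)$, integrated against the Gaussian kernel $\Gam(t,x;s,\cdot)$ and the convolution kernels, which localize $y$ near $x$ at the parabolic scale, \emph{gains} a positive half-power. Carrying this through term by term, one finds that $R_N(s,\cdot)$, in a suitable polynomially weighted sup-norm, is $O\big((T-s)^{(N-1+k)/2}\big)$ --- the exponent being governed precisely by $h+m\ge N+1$ --- and, since $\Gam$ acts boundedly in that norm,
\begin{align}
\big|u(t,x)-\bar u_N(t,x)\big| \le C\int_t^T (T-s)^{(N-1+k)/2}\,\dd s = C'\,(T-t)^{(N+1+k)/2},
\end{align}
with the at most exponential growth of $\varphi$ dominated by the Gaussian factors while $T-t$ stays bounded.

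The main obstacle is exactly this last estimate: establishing $\|R_N(s,\cdot)\|=O\big((T-s)^{(N-1+k)/2}\big)$ in the correct weighted norm demands a careful accounting of half-powers of $T-s$, balancing the negative powers produced when derivatives hit the frozen Gaussian against the positive powers gained from (i) the degrees of the Taylor-remainder coefficients (available because $\xb=x$), (ii) the simplex time integrations inside each $\Lc_m$, and (iii) the regularity $\varphi\in C^{k-1}$, all while keeping the ``polynomial $\times$ Gaussian'' structure closed at each step so the $y$-integrations can be carried out explicitly. This is precisely the bookkeeping performed in \citet[Sec.~2]{lorig-pagliarani-pascucci-2}, with the multidimensional refinements and rigorous error analysis in \cite{lorig-pagliarani-pascucci-4}; the present statement is a direct consequence.
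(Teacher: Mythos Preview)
Your proposal is correct and is in fact a faithful expansion of the paper's own proof, which consists solely of the citation ``See \cite[Theorem 3.10 and Remark 3.11]{lorig-pagliarani-pascucci-4}.'' The cascade of PDEs for the $u_n$, the identification of the residual $R_N=\sum_{m=0}^{N}\sum_{h\ge N+1-m}\Ac_h(t)u_m(t)$, the Duhamel representation of $u-\ub_N$, and the parabolic power-counting (derivatives costing $(T-s)^{-1/2}$, polynomial factors in $y-\xb$ gaining $(T-s)^{1/2}$ because $\xb=x$, and $k$ derivatives absorbed by $\varphi\in C^{k-1}$) are exactly the ingredients of the argument in \cite{lorig-pagliarani-pascucci-2,lorig-pagliarani-pascucci-4} that the paper invokes; you have simply sketched what is inside the black box.
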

\begin{proof}
See \cite[Theorem 3.10 and Remark 3.11]{lorig-pagliarani-pascucci-4}.
\end{proof}

Next, we will derive an approximation for $\d_K^n \Gam_{S_T}(K)$, which is needed to compute \eqref{eq:pi.x2}, and establish the accuracy of this approximation.  To this end, let us define
\begin{align}
c(t,x;T,K)
	&=	\Eb \, [ (S_T - K)^+ | X_t = x ] , &
u(t,x;T,k)
	&:=	\Eb \, [ (\ee^{X_T^1} - \ee^k )^+ | X_t = x ] .
\end{align}
Since $X^1 = \log S$ we clearly have
\begin{align}
c(t,x;T,K)
	&=	u(t,x;T,k(K)) , &
k(K)
	&:=	\log K .
\end{align}
Note that the density of $S_T$ can be obtained by differentiating $c(t,x;T,K)$ twice with respect to strike
\begin{align}
\Pb(S_T \in \dd K | X_t = x) 
	&\equiv	\Gam_{S_T}(t,x;T,K) \dd K
	=	\d_K^2 c(t,x;T,K) \dd K \\
	&=		\d_K^2 u(t,x;T,k(K)) \dd K . \label{a}
\end{align}
Thus, we define $\Gamb_{S_T,N}$, the \emph{$N$th order approximation of $\Gam_{S_T}$}, as
\begin{align}
\Gamb_{S_T,N}(t,x;T,K)
	&:=	\d_K^2 \ub_N(t,x;T,k(K)) . \label{b}
\end{align}
The following theorem gives the accuracy of $\d_K^n \Gamb_{S_T,N}(K)$

\begin{theorem}
\label{thm:accuracy2}
Let Assumption \ref{ass:elliptic} hold and fix $\xb = x$.
Then we have
\begin{align}
|\d_K^n \Gam_{S_T}(t,x;T,K) - \d_K^n \Gamb_{S_T,N}(t,x;T,K)|
	&\leq C (T-t)^{(N-n)/2}, \label{eq:dK.accuracy}
\end{align}
where $C$ is a positive constant the depends on $M$, $N$, $K$ and $x$.
\end{theorem}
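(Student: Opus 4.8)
The plan is to reduce the statement to the call-price accuracy estimate of Theorem~\ref{thm:accuracy} by transferring the strike derivatives onto the terminal datum. First, combining \eqref{a}, \eqref{b} and linearity, for every $n$ one has
\begin{equation*}
\d_K^n\big(\Gam_{S_T}-\Gamb_{S_T,N}\big)(t,x;T,K)
    =  \d_K^{\,n+2}\big(u-\ub_N\big)(t,x;T,k(K)),
\end{equation*}
where $k(K)=\log K$. Writing $e(t,x;T,k):=u(t,x;T,k)-\ub_N(t,x;T,k)$ and applying Fa\`a di Bruno's formula (recall $k(K)=\log K$), the quantity $\d_K^{\,n+2}e(t,x;T,k(K))$ is a finite linear combination $\sum_{m=1}^{n+2}q_m(1/K)\,(\d_k^m e)(t,x;T,k(K))$ with universal polynomials $q_m$; since $K$ is fixed these coefficients are harmless. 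Hence it suffices to establish
\begin{equation*}
|\d_k^m e(t,x;T,k)| \le C\,(T-t)^{(N-m+2)/2},\qquad 1\le m\le n+2 .
\end{equation*}
The minimal exponent on the right, attained at $m=n+2$, equals $(N-n)/2$, and every other term is of strictly higher order in $T-t$; this yields \eqref{eq:dK.accuracy}.

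Second, I would interchange $\d_k$ with the approximation. By Theorem~\ref{thm:un} the operators $\Lc_{n'}(t,T)$ are differential operators in $x$ whose coefficients depend only on $(t,T)$, while $\Pc_0(t,T)$ acts by integration in the $y$-variable; thus $\d_k$ commutes through and
\begin{equation*}
\d_k^m\ub_N(t,x;T,k)
    =  \sum_{n'=0}^{N}\Lc_{n'}(t,T)\,\Pc_0(t,T)\,\varphi^{(m)}_k\,(x),
    \qquad \varphi^{(m)}_k(y):= \d_k^m\big(\ee^{y_1}-\ee^{k}\big)^{+} .
\end{equation*}
Consequently $\d_k^m\ub_N$ is exactly the $N$th order approximation, in the sense of Definition~\ref{def:ub.N}, of the Cauchy problem \eqref{eq:u.pde} with terminal datum $\varphi^{(m)}_k$, and $\d_k^m u(t,x;T,k)=\Eb[\varphi^{(m)}_k(X_T)\mid X_t=x]$ (well defined under Assumption~\ref{ass:elliptic} by parabolic regularity) is its exact solution, so $\d_k^m e$ is the approximation error for the datum $\varphi^{(m)}_k$. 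A short induction, using $\d_k\1_{\{y_1>k\}}=-\del(y_1-k)$ and $\d_k\del^{(i)}(y_1-k)=-\del^{(i+1)}(y_1-k)$, gives $\varphi^{(1)}_k(y)=-\ee^{k}\1_{\{y_1>k\}}$ and, for $m\ge 2$,
\begin{equation*}
\varphi^{(m)}_k(y) = -\ee^{k}\1_{\{y_1>k\}}+\sum_{i=0}^{m-2}a_{m,i}\,\ee^{(i+1)k}\,\del^{(i)}(y_1-k),
\end{equation*}
with combinatorial constants $a_{m,i}$.

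Third, I would estimate the error term by term, using linearity of the approximation map. The bounded, piecewise continuous part $-\ee^{k}\1_{\{y_1>k\}}$ is covered directly by Theorem~\ref{thm:accuracy} (at the roughness level of a bounded, possibly discontinuous datum), so its contribution to $\d_k^m e$ is $O((T-t)^{(N+1)/2})$. For each distributional piece $\del^{(i)}(y_1-k)$, $0\le i\le m-2$, I would invoke the extension of Theorem~\ref{thm:accuracy} to distributional data that follows from the same methods applied to the transition-density expansion of Definition~\ref{def:ub.N} (see \cite{lorig-pagliarani-pascucci-4}): the case $i=0$ is exactly the accuracy of the density approximation, which at a fixed point is $O((T-t)^{N/2})$, and each further spatial derivative costs a factor $(T-t)^{-1/2}$, so the $\del^{(i)}$-term contributes $O((T-t)^{(N-i)/2})$. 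Equivalently, one may write $\Eb[\del^{(i)}(X_T^1-k)\mid X_t=x]=(-1)^i\d_k^i\Gam_{X_T^1}(t,x;T,k)$ and induct on $i$ from the $i=0$ density estimate together with Cauchy-type interior bounds for the $k$-derivatives. Over $0\le i\le m-2$ the worst exponent is $(N-m+2)/2$, attained at $i=m-2$, and it does not exceed $(N+1)/2$ when $m\ge 1$; summing the contributions gives the displayed bound for $\d_k^m e$, and hence the theorem.

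The one genuinely non-routine ingredient — and the step I expect to be the main obstacle — is this last accuracy estimate for the distributional terminal data $\del^{(i)}(\cdot_1-k)$: Theorem~\ref{thm:accuracy} as stated only reaches the roughness of an indicator, so obtaining the order $(N-i)/2$ for $i$-th derivatives of a Dirac mass requires the spatial-derivative bounds for the density expansion from \cite{lorig-pagliarani-pascucci-4} (or a self-contained duality argument). Everything else — the Fa\`a di Bruno chain rule, the commutation of $\d_k$ through the operators $\Lc_{n'}$ and $\Pc_0$, and the Leibniz differentiation of the call payoff — is routine bookkeeping.
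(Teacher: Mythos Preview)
Your proof is correct and shares the paper's opening moves exactly: reduce via \eqref{a}--\eqref{b} to $\d_K^{n+2}(u-\ub_N)$, then apply the chain rule for $k(K)=\log K$ to convert to a finite combination of $\d_k^m(u-\ub_N)$ with $1\le m\le n+2$. Where you diverge is in establishing the key estimate $|\d_k^m(u-\ub_N)|\le C(T-t)^{(N-m+2)/2}$. The paper simply invokes this bound as Theorem~4.4 of \cite{pag-pas-taylor}, an external result for strike-derivative accuracy, and is done in one line. You instead derive it: commuting $\d_k$ through $\Lc_{n'}$ and $\Pc_0$, expanding $\d_k^m(\ee^{y_1}-\ee^k)^+$ into an indicator plus Dirac derivatives, and then assembling the bound from Theorem~\ref{thm:accuracy} (for the indicator piece) together with density-derivative estimates from \cite{lorig-pagliarani-pascucci-4} (for the $\del^{(i)}$ pieces). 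Your route is longer but more self-contained within the references already used in the paper, and it makes transparent exactly where each half-power of $(T-t)$ is lost; the paper's route is cleaner but relies on an additional black-box citation. You correctly identify the distributional-data accuracy as the only substantive step, and your handling of it is sound.
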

\begin{proof}
From \eqref{a} and \eqref{b}, we have
\begin{align}
&\d_K^n \Big( \Gam_{S_T}(t,x;T,K) -  \Gamb_{S_T,N}(t,x;T,K) \Big) \\
	&=	\d_K^{n+2} \Big(  u(t,x;T,k(K)) - \ub_N (t,x;T,k(K)) \Big) \\
	&=	\frac{1}{K^{n+2}} \sum_{j=1}^{n+2} a_j^{n+2} \d_k^j \Big( u(t,x;T,k(K)) - \ub_N (t,x;T,k(K)) \Big) , \label{e1}
\end{align}
where the coefficients $(a_j^{n+2})$ are integers whose precise value is not important.
Next, by Theorem 4.4 of \mbox{\cite{pag-pas-taylor}}, we have
\begin{align}
| \d_k^n u(t,x;T,k) - \d_k^n \ub_N(t,x;T,k) |
	&\leq C (T-t)^{(N-n+2)/2} , \label{e2}
\end{align}
where the constant $C$ depends on $M$, $N$, $k$ and $x$.  The accuracy result \eqref{eq:dK.accuracy} follows directly from \eqref{e1} and \eqref{e2}.
\end{proof}

A similar analysis yields the same order of accuracy for the approximations of $\d_K^n \Gamt_{S_T}(K)$ and $\d_K^n \gam(K)$.  For brevity, we do not repeat the computations here.

%
%

\section{Examples}
\label{sec:examples}
In this section, we compute the  optimal static hedges, along with sensitivity analysis, in a variety of practical applications.  The resulting static portfolio  payoff profiles are also  shown to demonstrate the effectiveness of our methodology.  

\subsection{Hedging an option with options on a correlated underlying}
\label{sec:correlated}
In our first example, we consider two correlated underlyings $S = \ee^{X^1}$ and $V = \ee^{X^2}$, where  the log-price pair  $X=(X^1,X^2)$ satisfies
\begin{align}
\left. \begin{aligned}
\dd X_t^1
	&=	\mu_1(t,X_t) \dd t + \sig_1(t,X_t) \dd W_t^1 , \\
\dd X_t^2
	&=	\mu_2(t,X_t) \dd t + \sig_2(t,X_t) ( \rho \dd W_t^1 + \sqrt{1-\rho^2} W_t^2 ) ,
\end{aligned} \right\} &
	&\text{(under $\Pb$)} 
	\label{eq:LSV} \\
\left. \begin{aligned}
\dd X_t^1
	&=	\mut_1(t,X_t) \dd t + \sig_1(t,X_t) \dd \Wt_t^1 , \\
\dd X_t^2
	&=	\mut_2(t,X_t) \dd t + \sig_2(t,X_t) ( \rho \dd \Wt_t^1 + \sqrt{1-\rho^2} \Wt_t^2 ) . 
\end{aligned} \right\} &
	&\text{(under $\Pbt$)}
\label{eq:LSV2}
\end{align}
We assume that $S$ is traded and is therefore a martingale under $\Pbt$.  As such, we must have $\mut_1 = - \tfrac{1}{2} \sig_1^2$.  Although we do not assume it, if $V$ is traded, it too must be a martingale under $\Pbt$ and in this case the drift $\mut_2$ must satisfy $\mut_2 = - \tfrac{1}{2} \sig_2^2$.
\par
Suppose now we have sold a European call written on $V$ with maturity date $T$ and strike price $K'$.  This is the claim to hedge, so we denote 
\begin{align}
\Xi_T
	&=	(V_T-K')^+ = (\ee^{X_T^2}-K')^+ =: h(K',X_T^2) . \label{eq:h.def}
\end{align}
We wish to statically hedge this option   with bonds, forward contracts and European calls/puts written not on $V$ but on  $S$.  This situation may arise for a variety of reasons.  First, it could be that neither $V$ nor options on $V$ are liquidly traded, or the options are traded at   very few strikes, as is common in the  commodity markets.  Second, even if $V$ is traded, an investor may be prohibited from trading $V$ and options written on $V$ for legal reasons. Third, the market for options on $S$  may simply have superior liquidity as compared to  $V$. For instance, one may resort to forwards and/or options on the  S\&P500 index in order to hedge an option written on a component or non-component stock.

\subsubsection*{Continuum of Strikes $K \in [L,R)$}
Let us recall  the setting of Section \ref{sec:cont}, in which the bond, forward on $S$, and  calls/puts are available at every strike $K \in [L,R)$ (see \eqref{eq:g.def}).
To compute the optimal hedging strategy $(\pi^*,q^*,p^*)$ using Theorem \ref{thm:cont} we require expressions for the expectations in \eqref{eq:psi}.  For the current application, they are given by
\begin{align}
\Sigma
	&=	\Eb \,[ (\ee^{X_T^1} - \ee^x)^2] , &
\psi(K,K')	
	&=	\Eb \,[ g(K,\ee^{X_T^1}) g(K',\ee^{X_T^1}) ], &
\beta
	&=	\Eb \,[ (\ee^{X_T^1} - \ee^x )] , \\
z(K)	
	&=	\Eb \,[ g(K,\ee^{X_T^1})] , &
y(K)	
	&=	\Eb \, [(\ee^{X_T^1} - \ee^x) g(K,\ee^{X_T^1})] , &
\xi 
	&=	\Eb \,[ h(K',X_T^2)] ,   \\
\theta	
	&=	\Eb \,[ (\ee^{X_T^1} - \ee^x) h(K',X_T^2)] , &
\gam(K)
	&=	\Eb \, [g(K,\ee^{X_T^1}) h(K',X_T^2)] , &
\zt(K)
	&=	\Ebt \,[ g(K,\ee^{X_T^1}) ],
\end{align}
where $g$ is defined in \eqref{eq:g.def} and $h$ is defined in \eqref{eq:h.def}.  For certain model dynamics, these expectations can be computed explicitly.  In cases in which the expectations cannot be explicitly computed, analytic approximations can be obtained using Theorem \ref{thm:un}.

\par
In Figure \ref{fig:correlated.cont} we consider the case where  $X^1$ and $X^2$ are correlated  arithmetic  Brownian motions.    We assume that both $S$ and $V$ are traded assets, which requires that  $\mut_1 =- \tfrac{1}{2} \sig_1^2$ and $\mut_2 = - \tfrac{1}{2} \sig_2^2$.   We illustrate the optimal static portfolio $(q^*,p^*,\pi^*)$, and  examine the effects of the correlation $\rho$ and the cost constraint $C$.  In addition to plotting  the units of calls/puts held, that is, $\pi^*(K)$ as a function of strike $K$, we also plot the static portfolio's terminal  value
\begin{align}
\Phi(S_T)
	&:=	q^* + p^* (S_T-S_0) + \int_L^R \dd K \,  \pi^*(K) g(K,S_T)  \label{eq:Phi.cont}
\end{align}
 as a function $S_T$, and compare this portfolio profile against  the payoff $(V_T-K')^+$. 

From Figure \ref{fig:correlated.cont} we see two clear effects as the correlation  increases to 1.  First, the density of calls/puts in the optimal hedging portfolio becomes more concentrated near $K=K'$.  Second, the payoff function $\Phi$ more closely matches the payoff function $h(K',\log(\cdot))=(\cdot - K')^+$ of the option to be hedged.  This is intuitively what one would expect since, if $S$ and $V$ are perfectly correlated (i.e., $\rho = 1$), then the two assets are identical, i.e. $S_T = V_T$, so holding the call on $S_T$ with strike $K'$ will be the optimal and perfect hedge. Somewhat less intuitive is the effect of the cost constraint on the optimal hedging portfolio $\pi^*$.  In  Figure \ref{fig:correlated.cont} we see that, as the cost constraint becomes more severe (i.e., as $C$ decreases), the optimal static hedging strategy is to sell more   puts and    calls  with strikes $K$ far away from $K'$. Intuitively, selling those options  helps reduce the hedging cost in order to satisfy the cost constraint while not increasing the hedging errors around  $K'$.

\subsubsection*{Hedging with discrete strikes $(K_i)$}
As in Section \ref{sec:disc}, we  now assume that European-style calls/puts are available only at a discrete strikes $(K_i)_{i \in I}$.  We set $Z_t(0)= B_t=1$ and   $Z_t(i) = \Ebt[ g(K_i,S_T) | \Fc_t ]$ for $i \neq 0$.  In order to compute the optimal static portfolio $\pi^*$ using Theorem \ref{thm:disc},  we  compute the expectations in \eqref{eq:psi.gam.zt}, which are given by 
\begin{align}
\psi_{0,0}
	&=	1 , &
\psi_{0,i} 
	&=	\psi_{i,0}
	=		\Eb \, [g(K_i,\ee^{X_T^1}) ]&
\psi_{i,j}
	&=	\Eb \, [g(K_i,\ee^{X_T^1})g(K_j,\ee^{X_T^1}) ], \\
\gam_0
	&=	\Eb \, [h(K',X_T^2) ], &
\gam_i
	&=	\Eb \,[ g(K_i,\ee^{X_T^1}) h(K',X_T^2)] , &
z_0
	&=	\zt_i
	=		1 , \\
z_i
	&=	\Eb \, [g(K_i,\ee^{X_T^1})] , &
\zt_i
	&=	\Ebt \,[ g(K_i,\ee^{X_T^1}) ], &
i,j
	&\neq 0 .
\end{align}
For certain model dynamics, the above expectations can be computed explicitly.  In cases in which the above expectations cannot be computed explicitly, they can be approximated analytically in a Markovian framework using Theorem \ref{thm:un}.
\par
As in the previous case with  continuum of strikes, we assume that the log prices $X^1$ and $X^2$ are correlated  arithmetic  Brownian motions. In Figure \ref{fig:correlated.disc} (bottom), we show the optimal static portfolio weights and illustrate  the effect of  the  correlation parameter $\rho$ and   cost constraint $C$.    In addition, we   plot  the static portfolio profile (terminal value as a function of $S_T$), denoted by 
\begin{align}
\Phi(S_T)
	&:=	\pi_0^* + \sum_i \pi_i^* g(K_i,S_T) . \label{eq:Phi}
\end{align}
As the correlation increases from 0.5 to 0.9, we see that  the portfolio profile $\Phi$  matches more closely the claim payoff $h(K',\log(\cdot))=(\cdot - K')^+$, especially for large and small values of $S_T$.  However, in contrast to the case with  continuous strikes,  a perfect match is not possible due to the availability of only a limited number of strikes. On the other hand,  the effect of the cost constraint on $\pi$ is less straightforward.  Specifically, with a  more  stringent cost constraint (i.e., $C \to 0$), the optimal static portfolio tends to have  a negative payoff for low strikes and a positive and increasing payoff for high strikes, resulting in a higher quadratic hedging error. Furthermore, the resulting  portfolio profile is lower for all realization of $S_T$ when the allowed portfolio cost $C$ is reduced.

Moreover, under all conditions we observe that the magnitude of $\pi$ is greatest for strikes $K_i$ near $K'$.  However, due to the fact that we have a discrete number of hedging assets, the sign of $\pi$ oscillates as a function of strike.  {In contrast,  when using the optimal static strategy derived from the model  with a continuous strip of calls/puts, the sign of the optimal density  $\pi$ does  not oscillate. }  This has important practical consequences. From the hedger's point of view, it is far easier to take only or mostly long positions, rather than  taking alternating long and short positions in options. {Therefore, even though there is a finite number of strikes in practice, one can also adopt a discretized version of the optimal continuous density in Figure \ref{fig:correlated.cont} (top), as an alternative to the optimal discrete strategy $\pi(K_i)$ in Figure \ref{fig:correlated.disc} (top).} Hence, by discretizing the optimal static hedging strategy in Theorem \ref{thm:cont} (which assumes options trade with strikes in a continuum) one obtains a useful alternative to the strategy derived from assuming discrete strikes from the onset.


\subsection{Hedging an LETF option  with options on the reference}
\label{sec:LETF}
An  exchange traded fund (ETF) is typically designed to track a reference index, denoted by $S=\ee^{X^1}$. The reference index  dynamics under under the physical measure $\Pb$ and pricing measure $\Pbt$ are given by two-dimensional SDE \eqref{eq:LSV}, where the second component $V = \log X^2$ now represents the driver of volatility rather than a traded asset.  Now, we introduce $L$, a \emph{leveraged exchange traded fund} (LETF).  An LETF  is a managed portfolio  that returns a pre-specified multiple $\ell$ of the daily return of  the reference index $S$.  To illustrate this in the simplest form, with zero management fee,  interest and dividend rates, the dynamics of an LETF with \emph{leverage ratio} $\ell$ are related $S$ as follows:
\begin{align}
\frac{\dd L_t}{L_t}
	&=	\ell \frac{\dd S_t}{S_t} .
\end{align}
The most common values for the   leverage ratio  $\ell$ are $\{-3,-2,-1,2,3\}$.  As \cite{avellaneda1} show, the value of the LETF $L$ at any time $T>0$ is given by
\begin{align}
\frac{L_T}{L_0}
	&=	\( \frac{S_T}{S_0} \)^\ell \exp \Big( \frac{\ell(1-\ell)}{2}\int_0^T \sig_1^2(t,X_t) \dd t \Big) \label{eq:L=S} \\
	&=	\exp\( \ell(X_T^1 - X_0^1) + \frac{\ell(1-\ell)}{2} \< X^1 \>_T \) . 
\end{align}
where $\< X^1 \>_T$ is the quadratic variation of $X^1$ over the interval $[0,T]$.  Observe that the value of $L_T$ depends not only on the terminal value $S_T=\ee^{X_T^1}$, but also on the integrated variance (quadratic variation) of $\log S = X^1$.  Thus, the value of $L$ depends on the entire path of $S$.

\par
Options on LETFs of different leverage ratios are  also widely traded on the Chicago Board Options Exchange (CBOE). The pricing of these options under stochastic volatility models has been recently studied in \cite{LeungLorigLETF,LeungSircarLETF}.    In practice,  significantly fewer strikes are available for LETF options, some with wider bid-ask spreads, as compared to the nonleveraged counterparts. This phenomenon is partly due to the difficulty to   hedge LETF options, which in turn impedes active market making that typically narrows bid-ask spreads. Hence, we consider the static hedging problem for an LETF option using options on the reference index (nonleveraged underlying). With the optimal static portfolio, we obtain a candidate practical hedging strategy for LETF options, and   can also illustrate how the LETF option price and payoff relate to those of the vanilla options

A call option written on the LETF $L$ with leverage ratio $\ell$ has the terminal payoff
\begin{align}
\Xi_T
	&=	(L_T - K')^+ 
	= \( L_0 \exp \Big( \ell(X_T^1 - X_0^1) + \frac{\ell(1-\ell)}{2} \< X^1 \>_T \Big) - K' \)^+ 
	=: h(K',X_T^1,\< X^1 \>_T) . \label{eq:h.def.3}
\end{align}
We wish to statically hedge this option  using a bond, and European call and put options written on the reference index $S$. {It is  also possible  to consider   hedging an LETF option by dynamically trading the reference index provided it is liquidly traded, though this  is beyond the scope of this paper.   }

\subsubsection*{Hedging with discrete strikes $(K_i)$ and a bond}
We consider the setting of Section \ref{sec:disc}.  Specifically, we assume that European-style calls/puts are available at a discrete number of strikes $(K_i)_{i \in I}$.  We set $Z_T(0)= B_T=1$ and for $i \neq 0$ we set $Z_T(i) = g(K_i,S_T)$.  For simplicity, we consider the case of no cost constraints.  In order to obtain the optimal static portfolio $\pi^*$ using Theorem \ref{thm:disc}, the expectations in \eqref{eq:psi.gam.zt} we must compute are
\begin{align}
\begin{aligned}
\psi_{0,0}
	&=	1 , &
\psi_{0,i} 
	&=	\psi_{i,0}
	=		\Eb \,[ g(K_i,\ee^{X_T^1})] \\
\psi_{i,j}
	&=	\Eb \, [g(K_i,\ee^{X_T^1})g(K_j,\ee^{X_T^1})] , &
\gam_0
	&=	\Eb \,[ h(K',X_T^1,\<X^1\>_T)] , \\
\gam_i
	&=	\Eb \,[ g(K_i,\ee^{X_T^1}) h(K',X_T^1,\<X^1\>_T)] , &
i,j
	&\neq 0 .
\end{aligned} \label{eq:E}
\end{align}

\begin{remark}
If $X^1$ has constant   drift $\mu_1$ and volatility $\sig_1$, then
\begin{align}
X_T^1
 =	X_0^1 + \mu_1 T + \sig_1 W_T ,  \qquad \text{ and } \qquad 
\< X^1 \>_T
  =	\sig_1^2 T.
\end{align}
Also, the   expectations in \eqref{eq:E} can be computed explicitly. In fact,  since $\< X^1 \>_T = \sig_1 T$ is deterministic, we see from \eqref{eq:L=S} that an option on $L$ is simply a power option on $S$, which is also perfectly replicable    with  a static portfolio with a bond, a forward and calls/puts on $S$ at all strikes $K \in [0,\infty)$ without cost constraint.
\end{remark}

We wish to consider an incomplete market model under  which $\< X^1 \>_T$ is stochastic.  This leads us to  work with the  \cite{heston1993} model:
\begin{align}
\begin{aligned}
\dd X_t^1
	&=	\( m - \tfrac{1}{2} X_t^2 \) \dd t + \sqrt{X_t^2} \dd W_t^1 , \\
\dd X_t^2
	&=	\kappa ( \theta - X_t^2 ) \dd t + \del \sqrt{X_t^2} (\rho \dd W_t^1 + \sqrt{1-\rho^2} W_t^2) .
\end{aligned} \label{eq:heston}
\end{align}
While the  joint density of $(X_T^1,\<X^1\>_T)$ is not available,  the joint characteristic moment-generating function of $(X_T^1,\<X^1\>_T)$ is well known.
We define
\begin{align}
\Psi(T,x_1,x_2;\xi,\lam)
	&:=	\Eb_{x_1,x_2} \left[\exp \( \ii \xi X_T^1 + \lam \< X^1 \>_T \) \right].
\end{align}
An explicit expression for $\Psi$ is given in \cite[Proposition 2.1]{Drimus}.
For a suitable function $f$ we have
\begin{align}
\Eb_{x_1,x_2} f(X_T^1,\<X^1\>_T)
	&=	\frac{1}{2 \pi \ii}\int_{-\infty}^\infty \dd \xi \int_{\eta - \ii \infty}^{\eta + \ii \infty} \dd \eta \, 
			\Psi(T,x_1,x_2;\xi,\lam) \fh(\xi,\lam) , \label{eq:iFL} \\
\fh(\xi,\lam)
	&=	\frac{1}{2\pi} \int_\Rb \dd x \int_{\Rb_+} \dd y \, \ee^{- \ii \xi x - \lam y} f(x,y)  , \label{eq:FL}
\end{align}
where $\eta>0$ is a positive constant chosen to the right of any singularities in the integrand of \eqref{eq:iFL}.
Note, in certain cases, one may need to fix an imaginary component of $\xi$ in \eqref{eq:FL} in order for the Fourier-Laplace transform to converge.  For example, consider the function
\begin{align}
f(x,y)
	&=	( \ee^{\ell x - \theta y} - \ee^k )^+ , &
\ell,\theta
	&> 0 . \label{eq:f}
\end{align}
Inserting \eqref{eq:f} into \eqref{eq:FL} and integrating yield
\begin{align}
\fh(\xi,\lam)
	&=	\frac{\ell ^2 \exp \( k-\frac{i k \xi }{\ell } \) }{\xi  (\ell -i \xi ) (\theta  \xi -i \lambda  \ell )} , &
\Im(\xi)
	&< - \ell , &
\Re(\lam)
	&>	\Im \left(\frac{\theta  \xi }{\ell }\right). \label{eq:f.hat}
\end{align}
The Fourier-Laplace transform  $\fh$  allows us to compute the Fourier-Laplace transforms of
\begin{align}
&h(K',x,y),& &g(K_i,\ee^x),& &h(K',x,y)g(K_i,\ee^x)& &\text{and}& &g(K_i,\ee^x)g(K_j,\ee^x) 
\end{align}since these functions share the same form as $f$ in \eqref{eq:f}. Inserting \eqref{eq:f.hat} into \eqref{eq:iFL} and integrating, one can compute numerically the values of the expectations appearing in \eqref{eq:E}.
\par
Figure \ref{fig:LETF} illustrates  the   optimal static   portfolio weights $\pi^*(K_i)$ for both the triple long LETF ($\ell = +3$) and the triple short LETF ($\ell=-3$), as well as the  portfolio profile $\ell(S_T)$ given by \eqref{eq:Phi}.  If the integrated variance is very small, then we  see from \eqref{eq:L=S} that $\log (L_T/L_0) \approx \ell \log (S_T/S_0)$.  Thus, if $\ell$ is positive, then $L$ moves in the direction of $S$ and if $\ell<0$, then $L$ moves in the direction opposite of $S$.  Since the claim to be hedged $\Xi_T = (L_T - K')^+$ has a positive payoff if and only if $L_T>K'$, we expect that, for $\ell>0$, the optimal hedge $\pi^*$ would be to hold calls on $S$ with strikes above $K'$, and for $\ell<0$ the optimal hedge $\pi^*$ would be to hold puts on $S$ with strikes below $K'$.  And, indeed, this is precisely what we observe in Figure \ref{fig:LETF}. {Moreover,  we notice that the hedging portfolio for the $(-3)$-LETF  call involves many long positions in puts on the reference index. This is not surprising given the fact that    calls on the $(-3)$-LETF  and puts on the reference index are all bearish positions. On the other hand,  for hedging  the $(+3)$-LETF call, the static portfolio  weights exhibit  oscillating behavior with more long positions in OTM  calls.} In both cases, the weights are assigned heavily on options that realize a positive payoff when the LETF options do also.


\subsection{Hedging a geometric Asian call  with European options}
\label{sec:asian}
As our last example,  we consider  static hedging  another  path-dependent derivative.  First, let  $S = \ee^{X^1}$ where
\begin{align}
\dd X_t^1
	&=	\mu(t,X_t^1) \dd t + \sig(t,X_t^1) \dd W_t , 
\end{align}
That is, $X^1$ has local volatility dynamics. We fix a time $T>0$ and introduce process $A = \ee^{X^2}$ where $X^2$ satisfies
\begin{align}
\dd X_t^2
	&=	\frac{1}{T} X_t^1 \dd t , &
X_0^2
	&=	0 .
\end{align}
Note that $X_T^2$ is the time average of $X^1$ over the interval $[0,T]$.  We consider a    \emph{geometric Asian} call with the terminal payoff:
\begin{align}
\Xi_T
	&=	(A_T-K')^+ = (\ee^{X_T^2}-K')^+ =: h(K',X_T^2) . \label{eq:h.def.asian}
\end{align}

\subsubsection*{Continuum of Strikes $K \in [L,R)$ and a bond}
Following  the setting of Section \ref{sec:cont} without the cost constraint, we construct a static portfolio using   calls/puts available   at every strike $K \in [L,R)$, plus bonds, if needed. In this example, the expectations that need to be computed in order to implement Theorem \ref{thm:cont} are
\begin{align}\label{AsianExpectations}
\Sigma
	&=	\Eb \, [(\ee^{X_T^1} - \ee^x)^2] , &
\beta
	&=	\Eb \,[ (\ee^{X_T^1} - \ee^x )] , &
z(K)	
	&=	\Eb \,[ g(K,\ee^{X_T^1})] , \\
y(K)	
	&=	\Eb \,[ (\ee^{X_T^1} - \ee^x) g(K,\ee^{X_T^1}) ], &
\xi 
	&=	\Eb \, [h(K',X_T^2) ],   &
\theta	
	&=	\Eb \, [(\ee^{X_T^1} - \ee^x) h(K',X_T^2)] ,
\end{align}
along with the ratio $\frac{\d_K^2 \gam(K)}{\Gam_{S_T}(K)}$ that appears in \eqref{eq:soln.3}.
\par
If the drift $\mu$ and volatility $\sig$ of $X^1$ are constant, then $(X^1,X^2)$ is jointly Gaussian.  The mean vector and covariance matrix of $(X_T^1,X_T^2)$ are
\begin{align}
\mv(0,T)
	&=	\begin{pmatrix}
			x + (\mu^X - \tfrac{1}{2}(\sig^X)^2)T \\ 
			x + \tfrac{1}{2}(\mu^X - \tfrac{1}{2}(\sig^X)^2) T
			\end{pmatrix} , &
\Cv(0,T)
	&=	\begin{pmatrix}
			(\sig^X)^2 T & \tfrac{1}{2}(\sig^X)^2 T \\
			\tfrac{1}{2}(\sig^X)^2 T & \tfrac{1}{3}(\sig^X)^2 T
			\end{pmatrix} .
\end{align}
The approximate correlation is $\rho = \sqrt{3}/2   \approx 0.866$.  In fact, all  the expectations  in \eqref{AsianExpectations} can be computed explicitly.
Since we have already computed optimal hedges for correlated Gaussian random variables in Section \ref{sec:correlated}, we shall consider  a sophisticated  model next. 
\par
We now suppose that $X^1$ admits the Constant Elasticity of Variance (CEV)  dynamics
\begin{align}
\dd X^1_t
	&=	\( m - \tfrac{1}{2} \del^2 \ee^{2(\eta-1)X_t^1} \) \dd t + \del \ee^{(\eta-1)X_t^1} \dd W_t . \label{eq:CEV}
\end{align}
The density of $X^1$ is known in the CEV setting due to results from \cite{CoxCEV}, though the  joint density of $(X^1,X^2)$ has to be computed numerically.  We follow  the approximation methods outlined in Section \ref{sec:practical} to compute the optimal static hedges.
\par
In Figure \ref{fig:Asian}, we plot the optimal unconstrained static hedging portfolio $\pi^*(K)$ as a function of $K$.  {Notice that the static portfolio includes most options around the strike $K'$  of the Asian call. Moreover, most of the options are held long, except for the small short positions taken for  the deep OTM calls (for $K>1.25K'$ in the figure).} The profile resembles that in the case of   hedging with options on a correlated asset in Figure \ref{fig:correlated.cont} (top left).  This is not surprising given that  $S$ and its geometric average $A$ are positively but  not perfectly correlated.  We also plot the   hedging portfolio terminal value $\Phi(S_T)$ as a function of $S_T$ (see \eqref{eq:Phi.cont}), and it  appears to be increasing convex,  similar to a call payoff as expected.  However, the slope of the payoff $\Phi$ is significantly less than $1$ even for large $S_T$ because the Asian call payoff depends not only on $S_T$ but the average  of  $S$ over time.

\section{Conclusion}
\label{sec:conclusion}
We have considered the static hedging problem using a bond, a forward and a (possibly semi-infinite) strip of calls/puts, and a subset of these instruments. The optimal strategy is derived  without assuming dynamics of the underlying  asset(s), and is shown to reconcile with the result of \cite{carrmadan1998}, which is a  special case of our framework.  A useful connection is established  between the optimal static hedging strategy with discrete option strikes and our proposed continuous-strike strategy implemented over discrete strikes, and we show the contrasting behaviors of these  strategies. The numerical implementation of the optimal strategy is conducted  in a general incomplete Markov diffusion market, with  examples of exotic/path-dependent claims such as  options on a non-traded asset, as well as Asian and LETF options.

\par
For  future research,  one direction is to generlize our results for  static hedges up to a random  time, rather than a fixed terminal time considered here.  This extension will provide new alternative ways to statically hedge  claims such as barrier options and American options. In other  related directions, one can   consider semi-static hedges with finite number of rebalancing (see \cite{CarrWuStatic}),   or incorporate static positions into portfolio optimization problems; see, for example, \cite{aytacsircar, leungsircar2}.  

\subsection*{Acknowledgments}
The authors are grateful to Peter Carr and Ronnie Sircar for a number of helpful discussions.
Additionally, the authors would like to thank two anonymous referees whose comments improved the quality and readability of this manuscript.

\appendix 
\section{\label{sec:Proof_finite_assets}Proof of Proposition \ref{thm:disc}}
Optimization problem \eqref{eq:pi.star.2} is a {finite-dimensional} quadratic programming problem with linear constraints. By Assumption \ref{ass:independent}, the matrix  $\psi$ is positive definite, and thus, invertible. 
Define the \emph{Lagrangian}
\begin{align}
L(\pi,\lam)
	&:=	J(\pi) - \lam \cdot ( H(\pi) - C ) .
\end{align}
The Karush-Kuhn-Tucker (KKT) conditions, which are necessary and sufficient for this convex optimization (see \cite{boydbook}), are
\begin{align}
&\text{stationarity}:&
0
	&=	\d_{\pi_k} L(\pi,\lam) \\ & &
	&=	\sum_i \sum_j  \( \del_{i,k} \pi_j + \pi_i \del_{j,k} \) \psi_{i,j} -	2 \sum_i \del_{i,k} \gam_i - \lam \sum_i \del_{i,k} \zt_i\\ & &
	&=	2 \sum_i \psi_{k,i} \pi_i - 2 \( \gam_k + \frac{\lam}{2} \zt_k \) , \qquad \forall \, k , \label{eq:KKT.1} \\
&\text{complementary slackness}:&
0
	&=	\lam \cdot ( H(\pi) - C ) \\ & &
	&=	\lam \cdot \(\sum_i \pi_i \zt_i - C \) . \label{eq:KKT.2}
\end{align}
In matrix notation, equation \eqref{eq:KKT.1} becomes
\begin{align}
\psi \pi
	&= \gam + \frac{\lam}{2} \zt  &
	&\Rightarrow&
\pi \equiv \pi(\lam)
	&=	\psi^{-1} \( \gam + \frac{\lam}{2} \zt \)  \label{eq:pi.lambda.0} \\ & & 
	&\Rightarrow&
\pi_i \equiv \pi_i(\lam)
	&=	\sum_j \psi_{i,j}^{-1} \( \gam_j + \frac{\lam}{2} \zt_j \) . \label{eq:pi.lambda}
\end{align}
Inserting expression \eqref{eq:pi.lambda} for $\pi_i$ into \eqref{eq:KKT.2} we obtain
\begin{align}
0
	&=	\lam \cdot \( \sum_i \sum_j \psi_{i,j}^{-1} \( \gam_j + \frac{\lam}{2} \zt_j \) \zt_i - C \)
	=		\lam \cdot \( \zt^\text{T} \psi^{-1} \( \gam + \tfrac{\lam}{2} \zt \) - C \) , \label{eq:lambda}
\end{align}
Equation \eqref{eq:lambda} has two solutions:
\begin{align}
\lam
	&=	0 , &
 	&\text{and}&
\lam
	&=	2 \( \frac{C - \zt^\text{T} \psi^{-1} \gam}{\zt^\text{T} \psi^{-1} \zt} \) . \label{eq:lambda.2}
\end{align}
Inserting expression \eqref{eq:lambda.2} into \eqref{eq:pi.lambda.0} yields \eqref{eq:pi.optimal}.

%
%
\clearpage
\begin{small}
\bibliographystyle{chicago}
\bibliography{Bibtex-Master-3.05}
\end{small}
%
%

\begin{figure}
\center
\begin{tabular}{c|c}
Effect of   correlation $\rho$ & Effect of   cost constraint $C$\\
\includegraphics[width=0.45\textwidth]{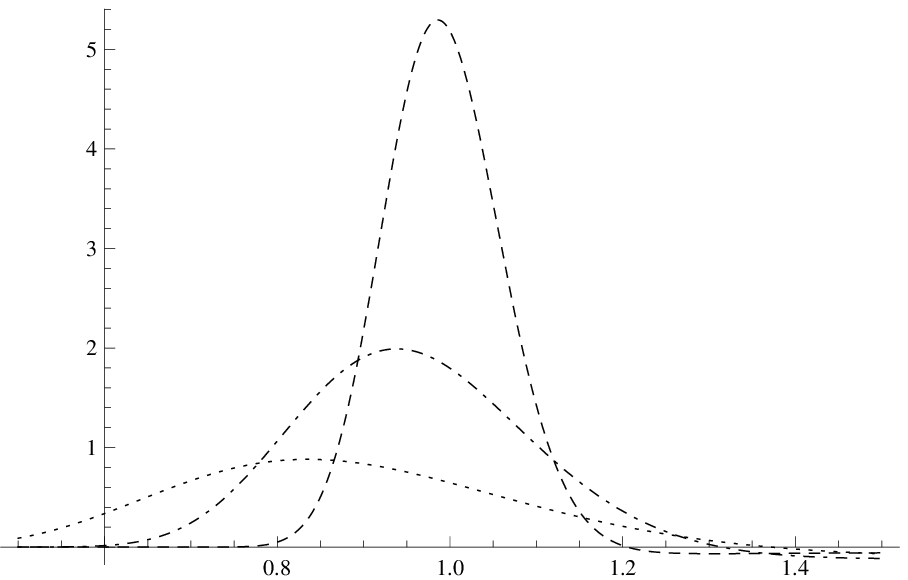}&
\includegraphics[width=0.45\textwidth]{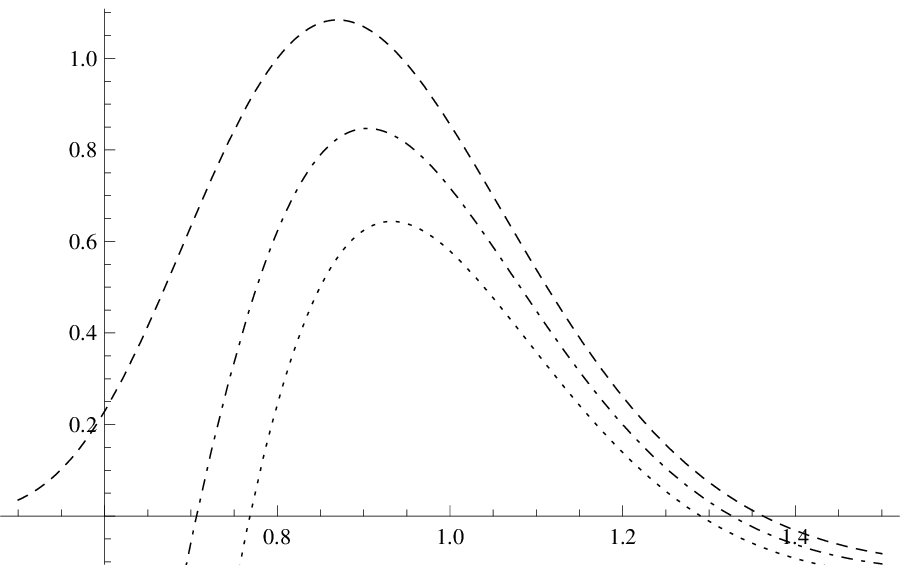}\\
$\pi^*(K)$ & $\pi^*(K)$\\
\includegraphics[width=0.45\textwidth]{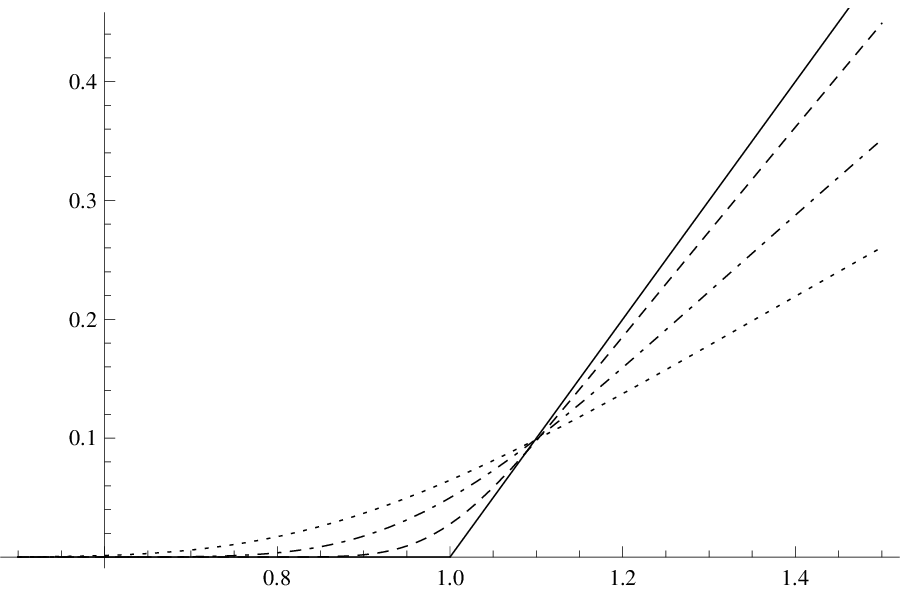}&
\includegraphics[width=0.45\textwidth]{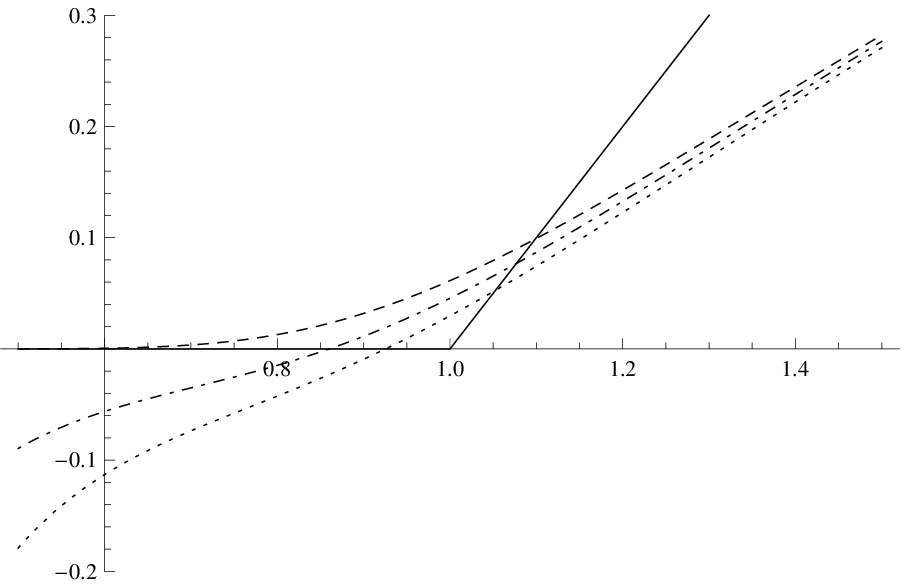}\\
$\Phi(S_T)$ vs $(V_T-K')^+$ & $\Phi(S_T)$ vs $(V_T-K')^+$ 
\end{tabular}
\caption{\small{Left: For  unconstrained static hedging problem, we plot  the  optimal density $\pi^*(K)$   is plotted over $K$   of puts/calls used in the unconstrained static hedging problem {(top left)}, and  the  terminal portfolio   profile $\Phi(S_T)$ according to  \eqref{eq:Phi.cont} (bottom left).  The dashed, dot-dashed, and dotted lines correspond to $\rho=(0.9,0.7,0.5)$, respectively.   The solid line represents    the payoff to be hedged, $(V_T-K')^+$. 
The parameters used are $\mu_1=\mu_2=0.1$, $\sig_1=\sig_2=0.2$, $S_0=V_0=K'=1$ and $T=0.5$. ~ Right: 
Let $c$ be the cost of the unconstrained optimal portfolio.  We plot   the optimal density $\pi^*(K)$ (top right), and  the portfolio   profile  $\Phi(S_T)$, given by \eqref{eq:Phi.cont}, both  for the cost-constrained static hedging problem. The dashed, dot-dashed, and dotted lines correspond to cost constraints of $C=(c, 0.75c, 0.5c)$, respectively.     On the right panel, the parameters used are
$\mu_1=\mu_2=0.1$, $\sig_1=\sig_2=0.2$, $S_0=V_0=K'=1$, $\rho=0.55$ and $T=0.5$. }}
\label{fig:correlated.cont}
\end{figure}

\begin{figure}
\center
\begin{tabular}{c|c}
Effect of   correlation $\rho$ & Effect of   cost constraint $C$\\
\includegraphics[width=0.45\textwidth]{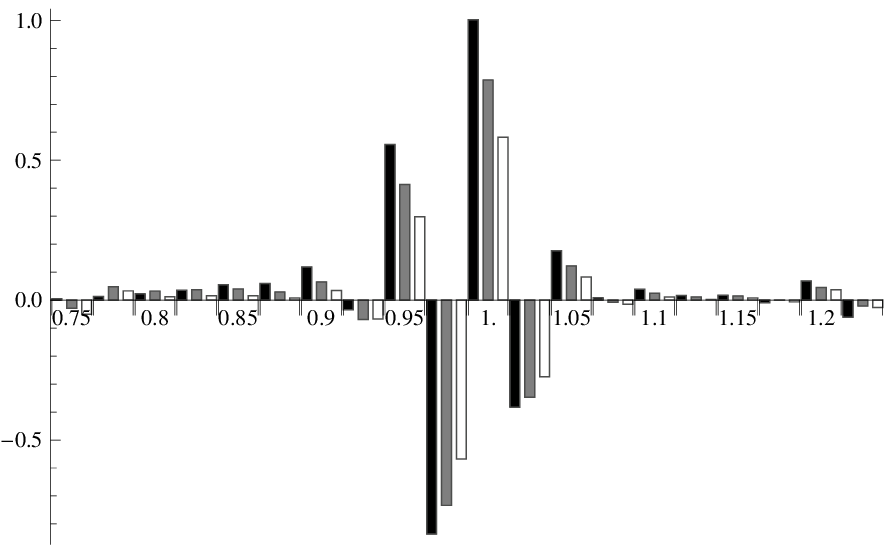}&
\includegraphics[width=0.45\textwidth]{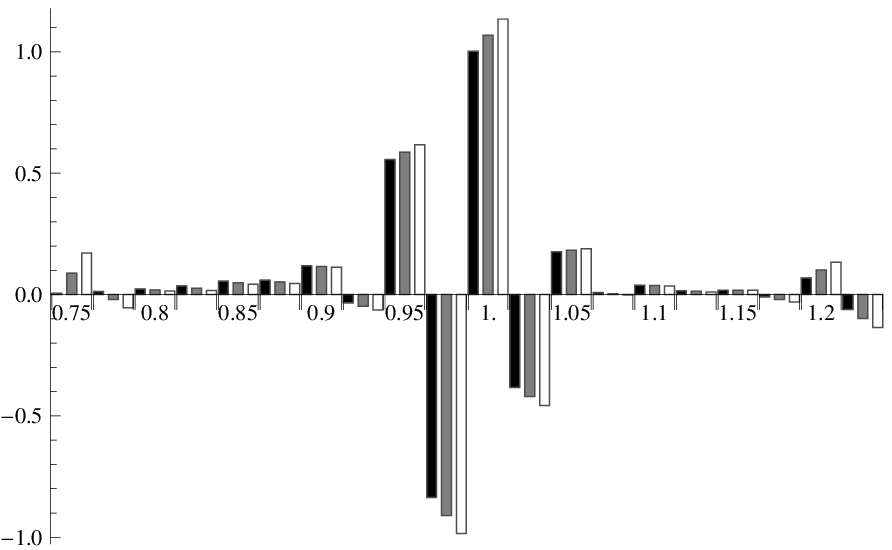}\\
$\pi^*(K_i)$ & $\pi^*(K_i)$\\
\includegraphics[width=0.45\textwidth]{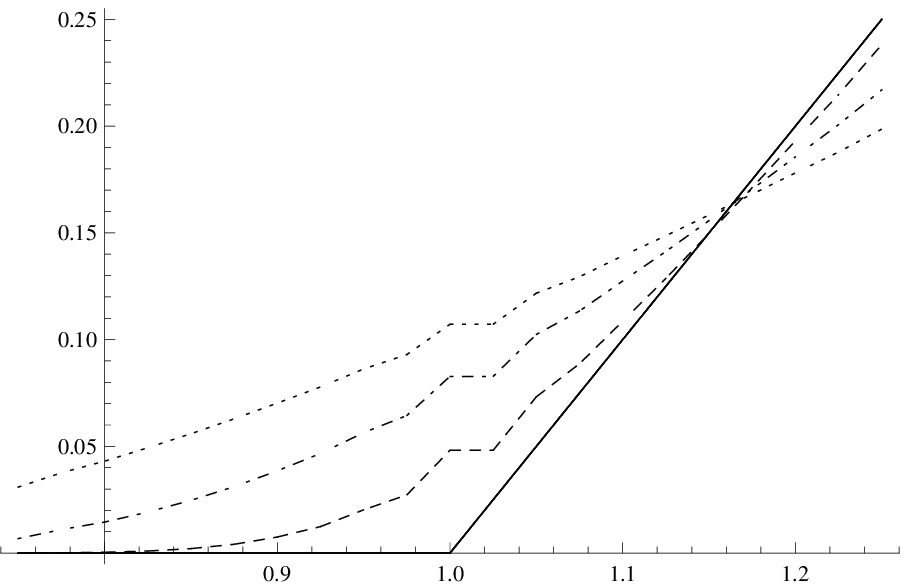}&
\includegraphics[width=0.45\textwidth]{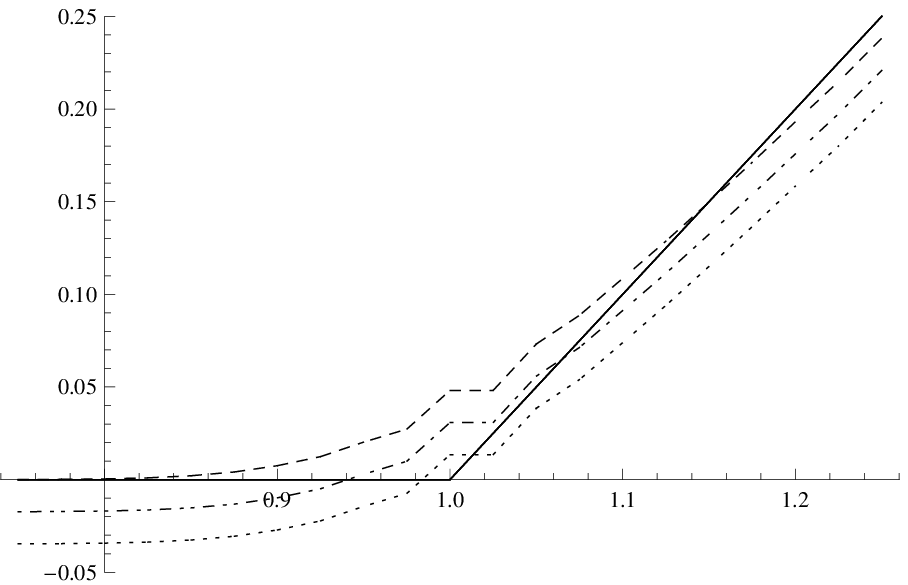}\\
$\Phi(S_T)$ vs $(V_T-K)^+$ & $\Phi(S_T)$ vs $(V_T-K)^+$ 
\end{tabular}
\caption{\small{Left: For  the unconstrained static hedging problem with discrete hedging assets, we plot the optimal number $\pi^*(K_i)$ of puts/calls over  discrete strikes.  The black, gray and white bars correspond to $\rho=(0.9,0.7,0.5)$, respectively (top left). The portfolio profile $\Phi(S_T)$, given by \eqref{eq:Phi}, is plotted as a function of $S_T$ (bottom left). The dashed, dot-dashed, and dotted lines correspond to $\rho=(0.9,0.7,0.5)$, respectively.  The solid line corresponds to the payoff to be hedged, $(V_T-K')^+$, plotted as a function of $V_T$. For the top/bottom left panels, the default  parameters are 
$\mu_1=\mu_2=0.1$, $\sig_1=\sig_2=0.2$, $S_0=V_0=1$ and $T=1.0$.~ Right:  Let $c$ be the cost of the unconstrained optimal portfolio.  We plot the optimal density $\pi^*(K_i)$ of puts/calls for the cost-constrained static hedging  portfolio, corresponding to  cost constraints $C=(c, 0.75c, 0.5c)$ (black, gray, and white bars, respectively) (top right). The terminal value   $\Phi(S_T)$  for the cost-constrained optimal portfolio as in  \eqref{eq:Phi} is shown for  cost constraints of $C=(c, 0.75c, 0.5c)$ (bottom right).    For the top/bottom right panels, the default  parameters are $\mu_1=\mu_2=0.1$, $\sig_1=\sig_2=0.2$, $S_0=V_0=1$, $\rho=0.9$ and $T=1.0$.}}
\label{fig:correlated.disc}
\end{figure}

\begin{figure}
\center
\begin{tabular}{c|c}
$\ell = + 3$ & $\ell = - 3$\\
\includegraphics[width=0.45\textwidth]{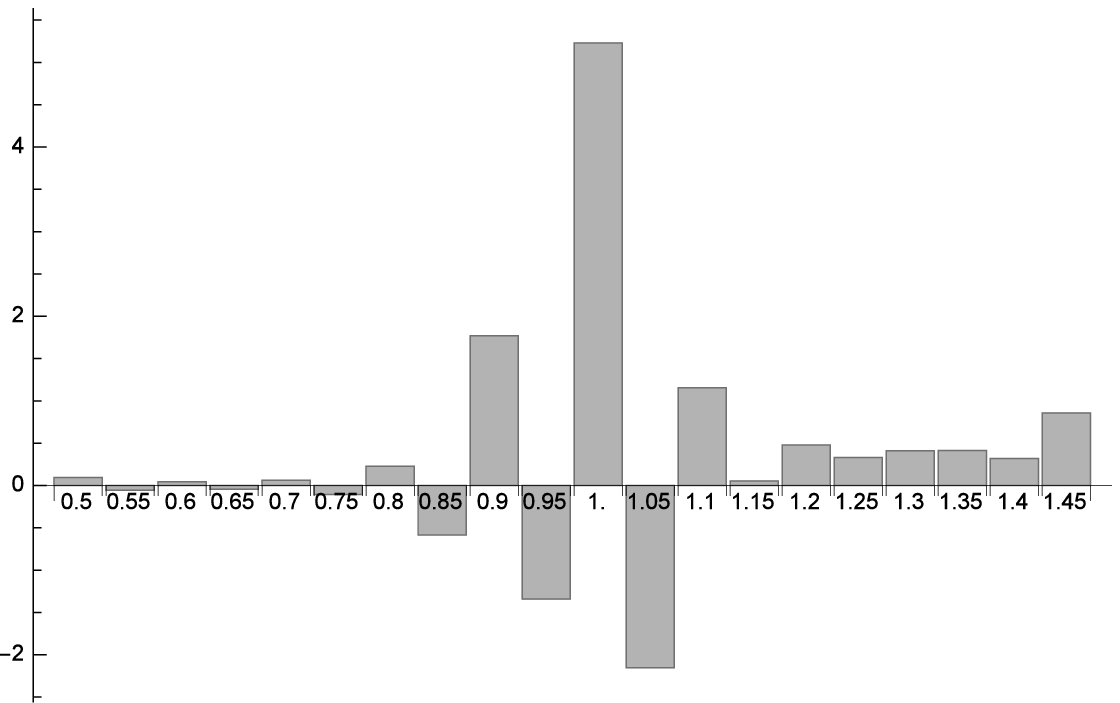}&
\includegraphics[width=0.45\textwidth]{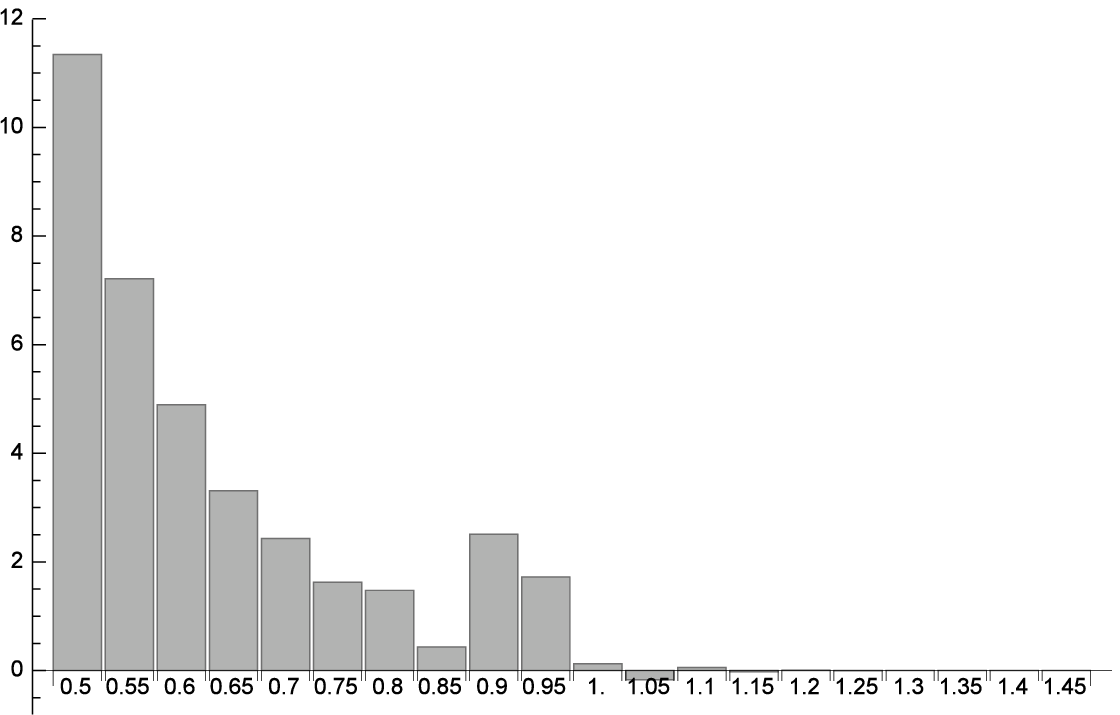}\\
$\pi^*(K_i)$  & $\pi^*(K_i)$  \\
\includegraphics[width=0.45\textwidth]{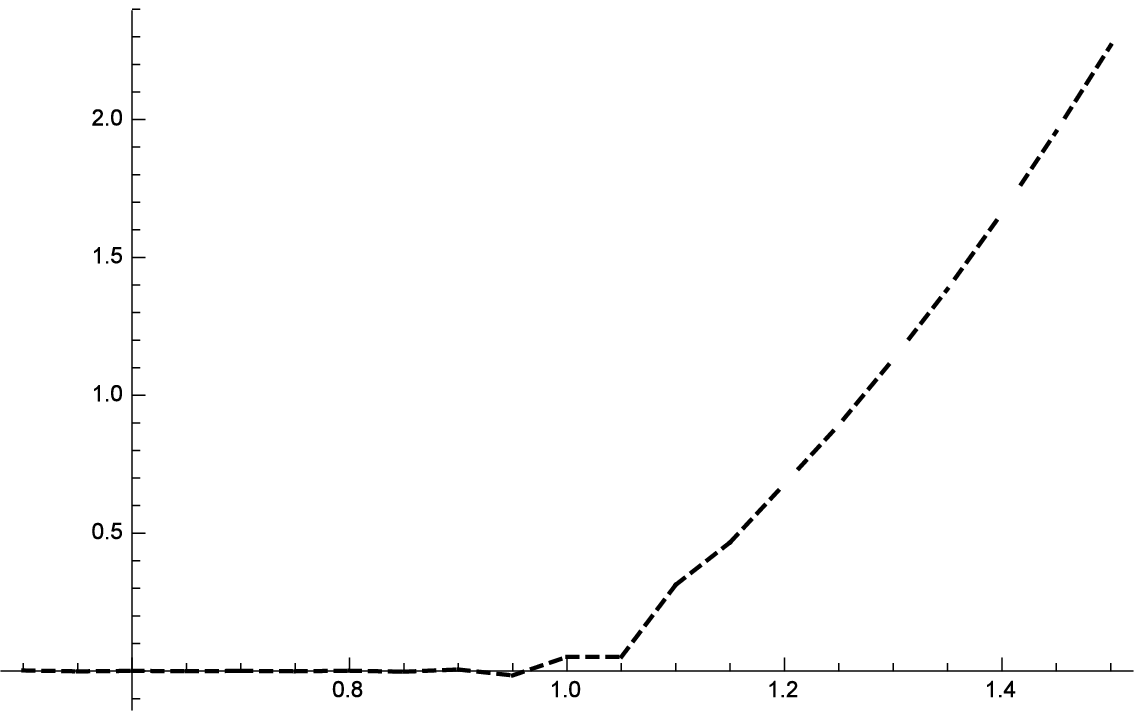}&
\includegraphics[width=0.45\textwidth]{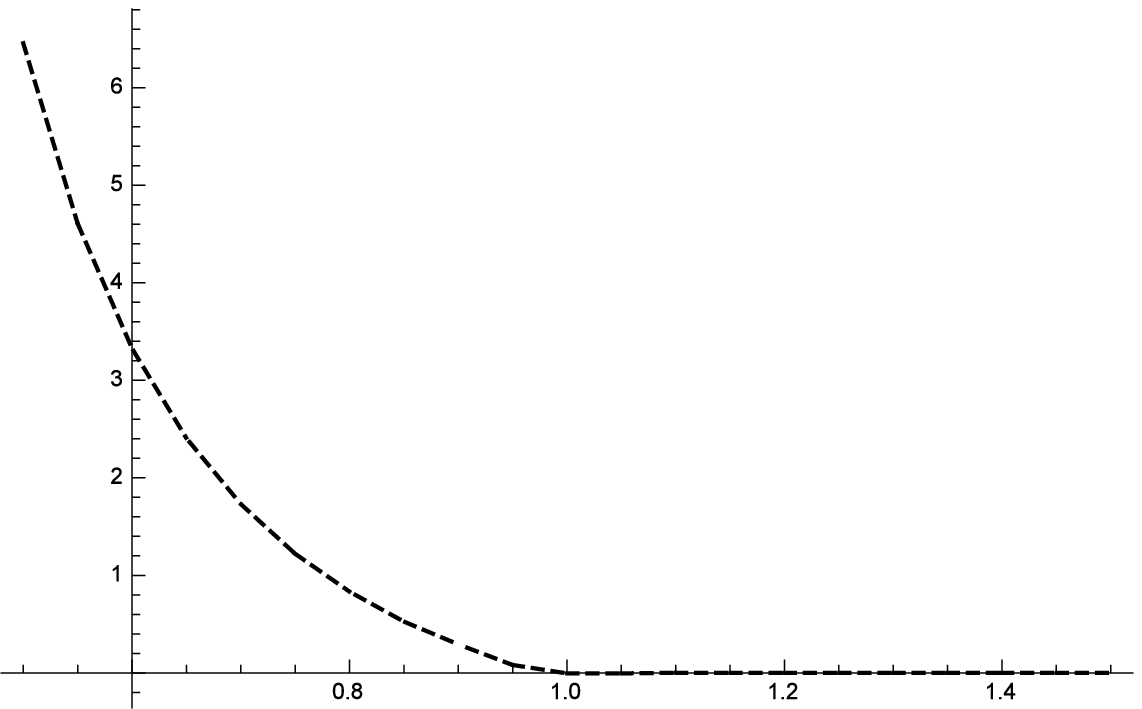}\\
$\Phi(S_T)$   & $\Phi(S_T)$ 
\end{tabular}
\caption{\small{
In the Heston model  \eqref{eq:heston}, we consider statically hedging a triple long LETF (left, $\ell=+3$) and triple short LETF (right, $\ell=-3$) by including  bonds, and calls/puts on the reference index at  discrete  strikes $(K_i)_{i \in I}$.
{Top}: We plot as a function of $K_i$ the optimal number $\pi^*(K_i)$ of puts/calls in  the unconstrained optimal static hedging portfolio. {Bottom}: The portfolio profile  $\Phi(S_T)$ as a function of $S_T$ (see \eqref{eq:Phi}) for the two cases. The parameters used are
$x_1=0$, $x_2=0.04$, $L_0=K'=1$, $m=0.1$, $\theta=0.04$, $\kappa=1$, $\rho=0$, $\del=0.1$ and $T=0.25$.}
}
\label{fig:LETF}
\end{figure}

\begin{figure}
\center
\begin{tabular}{cc}
\includegraphics[width=0.45\textwidth]{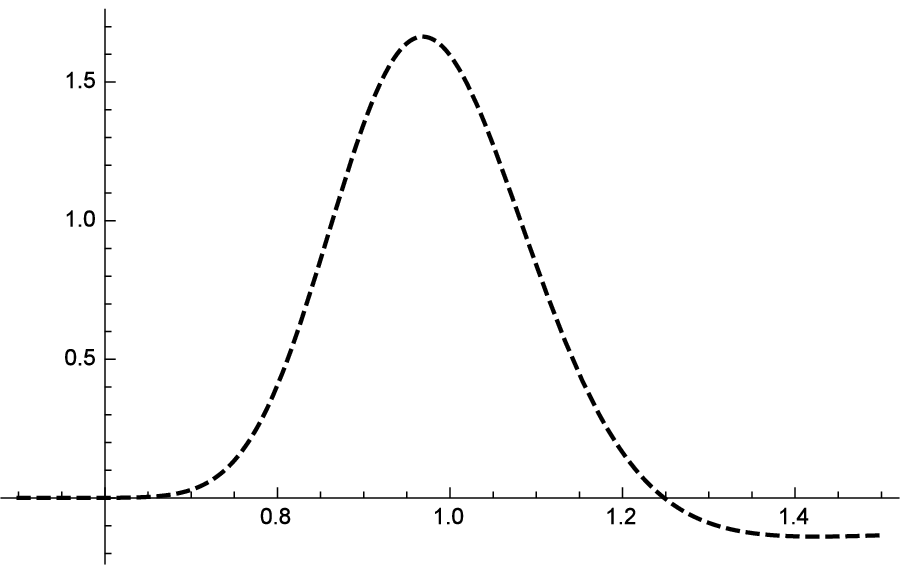}&
\includegraphics[width=0.45\textwidth]{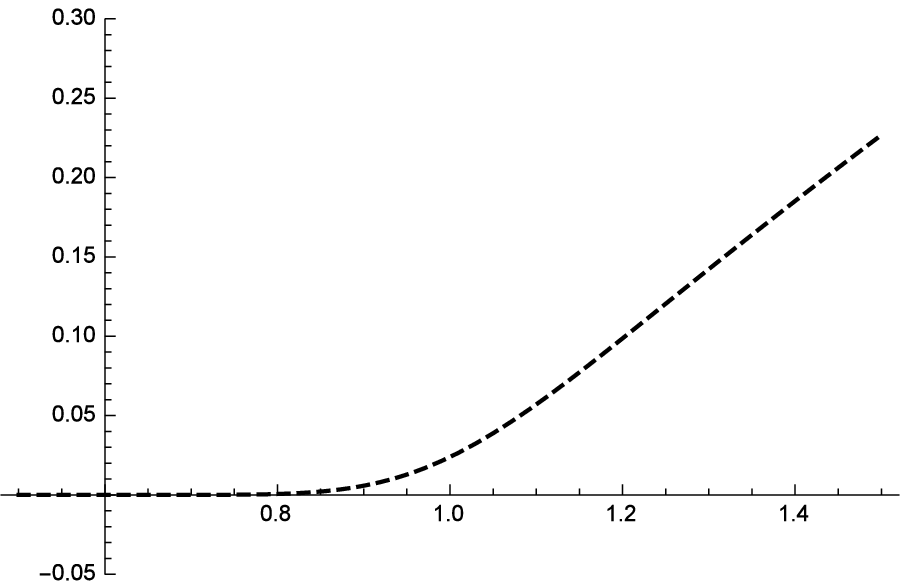}\\
$\pi^*(K)$ & $\Phi(S_T)$  \\
\end{tabular}
\caption{\small{
In the CEV model  \eqref{eq:CEV}, we consider statically hedging an Asian call using   bonds,   forwards, and calls/puts on the underlying stock at all strikes $K \in [0,\infty)$.
 For the put and calls in the unconstrained static hedging  portfolio, we plot the   optimal density $\pi^*(K)$  as a function of strike $K$ (left). Also, we  plot  the portfolio profile $\Phi(S_T)$ as a function of $S_T$ (right),  according to  \eqref{eq:Phi.cont}.  The parameters used here are
$x_1=0$, $K'=1$, $\theta=0.04$, $m=0.1$, $\del=0.2$, $\eta=0.7$, and $T=1.0$.}
}
\label{fig:Asian}
\end{figure}

\end{document}